\newcommand\BibTeX{{\rmfamily B\kern-.05em \textsc{i\kern-.025em b}\kern-.08em
T\kern-.1667em\lower.7ex\hbox{E}\kern-.125emX}}
 \newcommand\sub{\@startsection%
     {subsubsection}{5}{0mm}{-1\baselineskip}{.01\baselineskip}%
     {\normalfont\itshape}}
 \renewcommand\subsubsection{\@startsection%
     {subsubsection}{3}{0mm}{-1\baselineskip}{.01\baselineskip}%
     {\normalfont\itshape}}
        \newcommand\Appendix[2][?]{%
            \refstepcounter{section}%
            \addcontentsline{toc}{appendix}%
                {\protect\numberline{\appendixname~\thesection}#1}%
            {\raggedleft\bfseries \appendixname\
                \thesection\par \centering#2\par}%
                \sectionmark{#1}%
                \@afterheading
                \addvspace{\baselineskip}}
        \newcommand\sAppendix[1]{%
            \raggedleft\bfseries\appendixname\par
            \@afterheading\addvspace{\baselineskip}}
\newcolumntype{A}{>{\centering}p{100pt}}
\newlength\savedwidth
\def\coldot{.}%
{\catcode`\.=\active%
    \gdef.{$\egroup\setbox2=\hbox to \dimen0 \bgroup$\coldot}}
\def\rightdots#1{%
    \setbox0=\hbox{$1$}\dimen0=#1\wd0%
    \setbox0=\hbox{$\coldot$}\advance\dimen0 \wd0%
    \setbox2=\hbox to \dimen0 {}%
    \setbox0=\hbox\bgroup\mathcode`\.="8000 $}
\def\endrightdots{$\hfil\egroup\box0\box2}
\newcolumntype{d}[1]{D{.}{.}{#1}}
\newcolumntype{A}{>{\centering}p{100pt}}
\newcolumntype{.}{D{.}{.}{-1}}
\newcolumntype{P}[1]{>{\centering\arraybackslash}p{#1}}
\newcolumntype{M}[1]{>{\centering\arraybackslash}m{#1}}
\DeclareFontFamily{U}{euc}{}
\DeclareFontShape{U}{euc}{m}{n}{<-6>eurm5<6-8>eurm7<8->eurm10}{}%
\theoremstyle{plain}      
\theoremstyle{plain}      
\theoremstyle{plain}      
\theoremstyle{plain}      
\theoremstyle{definition} 
\theoremstyle{definition} 
\theoremstyle{definition} 
\theoremstyle{plain} 
\theoremstyle{definition} 
\theoremstyle{plain} \newtheorem{pro}{Proposition}
\theoremstyle{definition} 
\theoremstyle{definition} 
\theoremstyle{definition} 
\newcounter{nctr}
\newenvironment{3table}{\begin{threeparttable}}{\end{threeparttable}}
\newenvironment{en}{\begin{enumerate}}{\end{enumerate}}
\newcommand\mcol{\multicolumn}
\newcommand\bb{\mathbb}
\newcommand\te{\text}
\newcommand\ma[1]{\te{\bf{#1}}}
\newcommand\op{\operatorname}
\newcommand\argmin{\operatornamewithlimits{argmin}}
\newcommand\bias{\operatorname{\bb{B}ias}}
\newcommand\cov{\operatorname{\bb{C}ov}}
\newcommand\df{\te{df}}
\newcommand\E{\bb{E}}
\newcommand\iid{\op{iid}}
\newcommand\mse{\op{MSE}}
\newcommand\lan{\langle}
\newcommand\ov{\overline}
\newcommand\p{\bb{P}}           
\newcommand\pri{^\prime}
\newcommand\rank{\te{rank}}
\newcommand\ran{\rangle}
\newcommand\stack{\stackrel} 
\newcommand\tr{\op{tr}}
\newcommand\tth{^\text{th}}
\newcommand\var{\operatorname{\bb{V}ar}}
\newcommand\wh{\widehat}
\newcommand\wti{\widetilde}
\newcommand\R{\bb{R}}  
\newcommand\bern{\op{Bern}}
\newcommand\by{\ma{y}}
\newcommand\bP{\ma{P}} 
\newcommand\bV{\ma{V}} 
\newcommand\bZ{\ma{Z}}
\newcommand\bzero{\bm{0}} 
\newcommand\al{\alpha}
\newcommand\be{\beta}
\newcommand\ga{\gamma}
\newcommand\de{\delta}
\newcommand\ep{\varepsilon}
\newcommand\ka{\kappa}
\newcommand\kap{\kappa}
\newcommand\sig{\sigma}
\newcommand\bbe{\bm\beta}
\newcommand\bga{\bm\gamma}
\newcommand\bth{\bm\theta}
\providerobustcmd*{\bigcapdot}{%
  \mathop{%
    \mathpalette\bigop@dot\bigcap
  }%
}
\newrobustcmd*{\bigop@dot}[2]{%
  \setbox0=\hbox{$\bigcup\m@th#1#2$}%
  \vbox{%
    \lineskiplimit=\maxdimen
    \lineskip=-0.6\dimexpr\ht0+\dp0\relax
    \ialign{%
      \hfil##\hfil\cr
      $\m@th\centerdot$\cr
      \box0\cr
    }%
  }%
}
\providerobustcmd*{\cupdot}{%
  \mathop{%
    \mathpalette\op@dot\cup
  }%
}
\providerobustcmd*{\capdot}{%
  \mathop{%
    \mathpalette\op@dot\cap
  }%
}
\newrobustcmd*{\op@dot}[2]{%
  \setbox0=\hbox{$\m@th#1#2$}%
  \vbox{%
    \lineskiplimit=\maxdimen
    \lineskip=-0.9\dimexpr\ht0+\dp0\relax
    \ialign{%
      \hfil##\hfil\cr
      $\m@th\cdot$\cr
      \box0\cr
    }%
  }%
}
\providerobustcmd*{\subsetdot}{%
  \mathop{%
    \mathpalette\subop@dot\subset
  }%
}
\providerobustcmd*{\supsetdot}{%
  \mathop{%
    \mathpalette\subop@dot\supset
  }%
}
\newrobustcmd*{\subop@dot}[2]{%
  \setbox0=\hbox{$\m@th#1#2$}%
  \vbox{%
    \lineskiplimit=\maxdimen
    \lineskip=-0.9\dimexpr\ht0+\dp0\relax
    \ialign{%
      \hfil##\hfil\cr
      $\m@th\cdot$\cr
      \box0\cr
    }%
  }%
}
\providerobustcmd*{\subseteqdot}{%
  \mathop{%
    \mathpalette\subeqop@dot\subseteq
  }%
}
\providerobustcmd*{\supseteqdot}{%
  \mathop{%
    \mathpalette\subeqop@dot\supseteq
  }%
}
\newrobustcmd*{\subeqop@dot}[2]{%
  \setbox0=\hbox{$\m@th#1#2$}%
  \vbox{%
    \lineskiplimit=\maxdimen
    \lineskip=-0.7\dimexpr\ht0+\dp0\relax
    \ialign{%
      \hfil##\hfil\cr
      $\m@th\cdot$\cr
      \box0\cr
    }%
  }%
}
\newcommand*\Otri{\ensuremath{
        \;
        \begin{tikzpicture}
          \draw[thick] (0pt,0pt) circle (6.5pt);
          \draw[thick] (-6pt,+0pt) -- (+5.50pt,-2.75pt);
          \draw[thick] (-6pt,+0pt) -- (+5.50pt,+2.75pt);
        \end{tikzpicture}
        \;
        }
}
\newcommand{\dobigtri}[1]{%
  \vcenter{#1\kern.2ex\hbox{$\Otri$}\kern.2ex}}
\newcommand{\cse}{\op{CSE}}
\newcommand{\Cor}{\op{Cor}}
\newcommand{\nde}{\op{NDE}}
\newcommand{\nie}{\op{NIE}}
\newcommand{\tde}{\op{TE}}
\newcommand\aols{\wti} 
\newcommand\atsls{\wh} 
\begin{document}
\sloppy
\runninghead{Ginestet, Emsley, and Landau}
\title{Stein-like Estimators for Causal Mediation Analysis \\ 
       in Randomized Trials}

\author{Cedric E.~Ginestet\affilnum{1}, Richard Emsley\affilnum{2,3},
        and Sabine Landau\affilnum{1}}
 
\affiliation{
\affilnum{1}Department of Biostatistics and Health Informatics,
Institute of Psychiatry, Psychology and Neuroscience, King's College
London, \affilnum{2} MAHSC Clinical Trials Unit, The University of
Manchester, Manchester Academic Health Science Centre,
\affilnum{3} Centre for Biostatistics, School of Health Sciences, The
University of Manchester, Manchester Academic Health Science Centre}
\corrauth{Cedric E.~Ginestet
Department of Biostatistics and Health Informatics
Institute of Psychiatry, Psychology and Neuroscience
King's College London, PO20, 16 De Crespigny Park,
London SE5 8AF, UK}
\email{cedric.ginestet@kcl.ac.uk}

\begin{abstract}
Causal mediation analysis aims to estimate the natural direct and
indirect effects under clearly specified assumptions. Traditional
mediation analysis based on Ordinary Least Squares (OLS) relies on the
absence of unmeasured causes of the putative mediator and outcome. When
this assumption cannot be justified, Instrumental Variables (IV)
estimators can be used in order to produce an asymptotically
unbiased estimator of the mediator-outcome link. However, provided
that valid instruments exist, bias removal
comes at the cost of variance inflation for standard IV procedures such
as Two-Stage Least Squares (TSLS). A Semi-Parametric Stein-Like (SPSL)
estimator has been proposed in the literature that strikes a natural
trade-off between the unbiasedness of the TSLS procedure and the
relatively small variance of the OLS estimator. Moreover, the SPSL has
the advantage that its shrinkage
parameter can be directly estimated from the data. In this paper,
we demonstrate how this Stein-like estimator can be implemented in
the context of the estimation of natural
direct and natural indirect effects of treatments in randomized controlled trials.
The performance of the competing methods is studied in a simulation
study, in which both the strength of hidden confounding and the strength of the 
instruments are independently varied. These considerations are motivated by a trial
in mental health evaluating the impact of a primary care-based
intervention to reduce depression in the elderly.  
\end{abstract}
\keywords{Causal mediation analysis, Instrumental variables,
          Stein estimator, Randomized trials, Two-stage least squares}
\maketitle

\section{Introduction}\label{sec:introduction}
Mediation analysis has become a popular approach to data
analysis in a variety of disciplines. This approach permits to
study alternative causal paths linking an experimental factor of
interest with a particular outcome\citep{MacKinnon2008}. It
has been especially successful in the context of
mental health, where psychologists and psychiatrists are particularly
interested in the mechanisms of action of a given treatment. These
mechanisms are usually studied with respect to certain intermediate variables
that are likely to be related to the personality, cognition and social
environment of the individuals that are taking part in the study. 

In mental health, we are often concerned with evaluating the
effect of psychological therapy on clinical outcome, with respect to certain
intermediate variables. When the indirect effect of the treatment
through the intermediate variable is of interest, such a variable is
referred to as a \emph{target mediator}. By contrast, when we are
controlling for the intermediate variable, and the primary interest of
the study lies in estimating the direct effect of treatment on the
outcome; we refer to such a variable as a \emph{nuisance
  mediator}. Often, the distinction between a target and a nuisance
mediator depends on whether or not the mediator constitutes an
alternative form of treatment. This is the case in the PROSPECT data
set that motivates this study, in which the effect of
psychotherapy is mediated by adherence to a course of anti-depressant
medication. 

Several theoretical frameworks have been proposed for studying
mediation from a causal perspective. Such approaches tend to build
upon the foundational work of Baron and Kenny\citep{Baron1986} (1986),
who have established the basis of mediation analysis. 
This framework has then been formalized in order to allow for causal
inference. The first formalization of causal mediation analysis was
given by Robins and Greenland\citep{Robins1992} (1992); and several
variants have been proposed in the
literature, including the works of Pearl\citep{Pearl2001} (2001),
Rubin\citep{Rubin2004a} (2004), and VanderWeele\citep{Vanderweele2008}
(2008). In the paper at hand, we will describe causal mediation in terms of
potential outcomes, using the notation and the set of assumptions
adopted by Imai, Keele and Yamamoto\citep{Imai2010}
(2010). Throughout this article, we will assume that the outcome of
interest is continuous. 
In this setting, the main estimands of interest are the natural direct and
natural indirect effects, denoted NDE and NIE respectively. In trials,
such quantities can be estimated without bias, under the assumption that 
the intermediate variable is exogenous in the model for the
outcome. (A predictor of the outcome
variable is said to be \emph{exogenous}, whenever it is not correlated
with the error term in the model, and \emph{endogenous}, otherwise.)

In practice however, this exogeneity assumption can be difficult to
justify, due to the likely presence of baseline variables that are
common causes of the intermediate and the clinical outcome
variable. One of the proposed solutions to this problem
has been the use of instrumental variables (IVs), which can be combined
with mediation analysis, in order to draw causal
inference. (See Lynch et al.\citep{Lynch2008} (2008) and
Ten Have et al.\citep{TenHave2012} (2012) for a review of causal
mediation analysis.) In this article, we will specifically focus on
the use of interaction terms as instruments, constructed by
interacting the experimental factor with the baseline
covariates. Note, however, that our methods can readily be generalized
to other IVs. 

The most common estimator using IVs is the Two-Stage Least
Squares (TSLS)\citep{Wooldridge2002}, which relies on further
assumptions about the behavior of the candidate instruments. Under
these additional assumptions, the asymptotic properties of the TSLS
estimator are well-understood. Provided that the instruments solely
affect the outcome through the endogenous variable of interest, the
TSLS estimator is guaranteed to be asymptotically
unbiased\citep{Wooldridge2002}. For finite sample sizes, however, the
decrease in bias associated with the use of this estimator, will 
lead to an increase in variance. In particular, there may be
situations in which the variance increase of the TSLS estimator does
not warrant preferring that estimator over the potentially biased Ordinary
Least Squares (OLS) estimator. 

In this paper, we follow the lead of Judge and Mittelhammer
\citep{Judge2004} (2004), who have constructed a combined estimator, which
strikes a trade-off between the OLS and the TSLS estimators, by
minimizing the Mean Squared Error (MSE) of the resulting combined
estimator. This method closely resembles the so-called Stein estimator,
originally introduced by James and Stein\citep{James1961} (1961), and
made popular by Efron\citep{Efron1973} (1973). Stein estimators
have anticipated some of the central ideas of Bayesian
statistics, by shifting the main focus of statistical analysis from
minimizing an estimator's unbiasedness to minimizing an estimator's
MSE. These ideas are best articulated within the language
of decision theory (See Berger\citep{Berger1985} (1985), for an
introduction to decision theory). From this perspective, the MSE
can be formalized as a loss function, and the optimal estimator is the
one that minimizes that quantity. The Semi-Parametric Stein-Like
(SPSL) estimator is defined as an affine combination of the OLS
and TSLS estimators; where the shrinkage parameter controlling
the respective contributions of the OLS and TSLS estimators can be 
estimated from the data, under the assumption that the TSLS estimator
is asymptotically unbiased. 

The main contributions of this paper are twofold. Firstly, we provide the first use
of the SPSL estimator in the context of
causal mediation analysis. The SPSL will here be compared with
standard estimators, including the OLS and TSLS
estimators for estimating the effect of endogenous intermediate
variables in causal mediation; where OLS estimation here corresponds
to the standard Baron--Kenny approach. Note that the Baron--Kenny
framework generally assumes that the mediators are continuous, whereas our
approach enables us to accommodate binary mediators. The asymptotic behaviors of the
family of SPSL estimators have recently been well-studied
\citep{Mittelhammer2005,Judge2012,Judge2012a,Judge2013}.
The SPSL estimator has been used to investigate Local Average
Treatment Effects (LATEs) in dose-response models\citep{Ginestet2017a}.
However, to the best of the authors' knowledge, this family of
estimator has not been used in the context of causal mediation analysis, when
the estimation of the causal path from the intermediate variable
to the outcome is potentially biased, due to unmeasured confounding. 

Secondly, we generalize the SPSL estimator by allowing for the 
selection of a subset of parameters that affects the optimization of the
shrinkage parameter. Indeed, in many circumstances, one is solely
interested in the estimation of a particular set of estimands, and it
is therefore convenient to be able to restrict the dependence of the
shrinkage parameter on the MSE of a subset of target estimands. In
this paper, we implement such a restriction by introducing a
projection matrix, which permits to restrain the estimation of the
shrinkage parameter to a subset of the parameters of interest, such as
the direct effect of treatment, for instance. 

Our use of the SPSL estimator for causal mediation analysis is motivated by 
a clinical trial in mental health. The Prevention of
Suicide in Primary Care Elderly: Collaborative Trial, more
concisely referred to as PROSPECT\citep{Bruce2004}, is a randomized controlled trial,
which tested the effect of a primary care intervention on major risk
factors for suicide in an elderly population, and in which the
intermediate variable is whether or not patients are taking antidepressant
medication. This particular study
has served as a motivating example for several causal mediation
analyses previously published in the literature, including studies by
Ten Have et al.~(2007)\citep{TenHave2007}, 
Emsley et al.~(2010)\citep{Emsley2010}, and Small
(2012)\citep{Small2012}. In the paper at hand, we replicate some of
these previous results, and compare them with the performance of the
SPSL estimator for this data set. 

The PROSPECT data set is an unusual example of a mediation analysis,
since the main estimand of interest is the NDE.
That is, we wish to evaluate whether or not the
psychotherapeutic intervention affects the outcome, after having controlled
for the effect of taking antidepressant medication. This should be
contrasted with most other mediation studies, in which one is typically
interested in estimating the NIE --that is, the effect of the
target mediator on the outcome. In contradistinction, the
intermediate variable in the PROSPECT data may be regarded as a
nuisance mediator, which is solely of secondary interest to the
trialists.

The paper is organized as follows. In the first section, we
introduce the causal estimands of interest, and describe how such
parameters can be estimated using the OLS, the TSLS and the
SPSL estimator. The performance of our three competing estimators for
causal mediation analysis is then evaluated by means of a
Monte Carlo simulation study, in the second section. The methods are then
applied to a re-analysis of the PROSPECT data set, in the third section;
and we close with a discussion of the limitations and further
generalizations of such estimators in our final section. The proofs of
the main results in the paper are deferred to an appendix. 

\section{Causal Mediation Analysis}\label{sec:methods}
\subsection{Causal Estimands}\label{sec:estimand}
The sample data are assumed to have been collected as part of a
clinical trial, in which $R_{i}$ denotes randomized treatment offer to
the $i\tth$ subject. The clinical outcome of interest, denoted by $Y_{i}$, is
a continuous post-randomization variable, and $M_{i}$ is the putative
mediator under investigation, which is also a post-randomization
variable. The mediator may be either binary or continuous. 
In addition, there are also $k$ pre-randomization (or baseline)
variables, denoted by a random vector, $X_{i}$, such that
$X_{i}:=(X_{i1},\ldots,X_{ik})\pri$. Without any loss of generality,
these baseline variables may also be either binary or continuous.

For every $r\in\{0,1\}$, and for every $m\in\R$, the potential outcome
$Y_{i}(r,m)$ is defined as the outcome that would be observed for the
$i\tth$ subject, if $R_{i}$ and $M_{i}$ were to take values $r$ and
$m$, respectively. Several possible mechanisms have
been proposed in the literature that allow the potential outcomes,
$Y_{i}(r,m)$, to take different values according to different choices
of $r$ and $m$ \citep{TenHave2007,Small2012}.
Similarly, the potential mediator, $M_{i}(r)$, is defined as the
value taken by the mediator in the $i\tth$ subject, when the value of
$R$ is $r$. The aforementioned observed outcomes and observed
mediators are then defined as a function of the potential outcomes and
potential mediators, such that we have $Y_{i}:=Y_{i}(R_{i},M_{i})$,
and $M_{i}:=M_{i}(R_{i})$, respectively.

For every subject, every $r$, and every $m$, the potential outcomes
are given the following structural model,
\begin{equation}\label{eq:model potential}
     Y_{i}(r,m) := Y_{i}(0,0) + \be_{R,i}r + \be_{M,i}m,
\end{equation}
with $Y_{i}(0,0) := \bbe_{X}\pri X_{i} + \omega_{i}$, and $\E[\omega_{i}]=0$.
and where the parameters, $\be_{M,i}$ and $\be_{R,i}$, can vary
between subjects reflecting treatment effect and
mediator effect heterogeneity, respectively.

The parameters in model \eqref{eq:model potential} can thus be interpreted
in the following manner. Given a
subject $i$, the parameter, $\be_{M,i}$, denotes the effect caused by a
unit increase in the mediator on the outcome, holding treatment level at
$r$. Similarly, $\be_{R,i}$ should be interpreted as the effect of
treatment on the outcome, while holding the mediator constant at 
level $m$. Finally, we will respectively denote by $\be_{M}:=\E[\be_{M,i}]$ and 
$\be_{R}:=\E[\be_{R,i}]$, the average causal effect of the mediator and the
average effect of the treatment on the outcome.

Using our definitions of the observed outcome, $Y_{i}$, and of the
counterfactual, $Y_{i}(0,0)$; we obtain the following linear model for
the observed outcome, 
\begin{equation}\label{eq:model}
      Y_{i} = \bbe_{X}\pri X_{i} + \be_{R}R_{i} + \be_{M}M_{i} + \ep_{i}, 
\end{equation}
for every $i=1,\ldots,n$; in which $\bbe_{X}$ is a $k$-dimensional
column vector of unknown parameters containing an intercept, and where
the error terms comprise the individual deviations from the average
causal effect,
\begin{equation}\label{eq:ep}
    \ep_{i} := 
    \omega_{i}
    + \big(\be_{M,i}-\be_{M}\big)M_{i} 
    + \big(\be_{R,i}-\be_{R}\big)R_{i}.
\end{equation}
where recall that $\omega_{i}=Y_{i}(0,0) - \E[Y_{i}(0,0)|X_{i}]$.

Treating the mediator as unobserved, the potential outcomes can be
described by the following structural model,
\begin{equation}\notag
      Y_{i}(r) = \bth_{X}\pri X_{i} + \theta_{R,i}R_{i} + \xi_{i},
\end{equation}
where as before, $\theta_{R}:=\E[\theta_{R,i}]$ denotes the average
effect of treatment offer on the outcome, and with
$\E[\xi_{i}]=0$. This then leads to the following model for the
observed outcomes, 
\begin{equation}\label{eq:total}
      Y_{i} = \bth_{X}\pri X_{i} + \theta_{R}R_{i} + \nu_{i},
      \qquad\te{with }\nu_{i}:=\xi_{i}+(\theta_{R,i}-\theta_{R}). 
\end{equation}
This model is represented in Figure \ref{fig:total}. 

For continuous mediators, we can translate our choice of
notation, into the conventional Baron-Kenny notation
\citep{Baron1986}. If we were to represent the average causal effect of
$R$ on $M$ by $\ga_{R}$; we could then adopt the following notation,
$a:=\ga_{R}$, $b:=\be_{M}$, $c\pri:=\be_{R}$, and $c:=\theta_{R}$.
\begin{figure}[t]
\centering
\begin{tikzpicture}
    \draw (-2.0,1) node[draw,inner sep=8pt](r){$R$};
    \draw (-2.0,-1) node[draw,inner sep=8pt](x){$X$};
    \draw (+2.0,0) node[draw,inner sep=8pt](y){$Y$};
    \draw (2.0+1.5,0) node[draw,circle,inner sep=6pt](e){};
    \draw[->] (r) -- (y) node[midway,anchor=south]{$\theta_{R}$};
    \draw[->] (x) -- (y) node[midway,anchor=south]{$\bth_{X}$};
    \draw[->] (e) -- (y) node[midway,anchor=north west]{};
\end{tikzpicture}
\vspace{.25cm}
\caption{Graphical representation of the model of the total effect
  (TE) of $R$ on $Y$, as described in equation (\ref{eq:total});
  where the empty circle denotes an error term. Since
  subjects have been randomly assigned to the different levels of
  treatment allocation, $R$; it follows that $R$ is an exogenous
  predictor of $Y$. \label{fig:total}}
\end{figure}
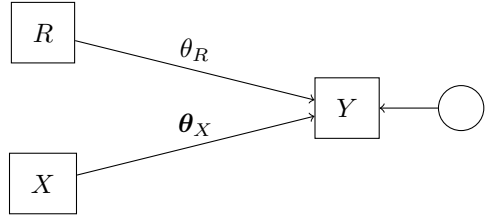

The main estimands of interest will be the natural direct effect (NDE)
and the natural indirect effect (NIE). For continuous outcomes, the
total effect (TE) can be decomposed such that 
\begin{equation}\notag
    \tde := \E\big[Y_{i}(1) - Y_{i}(0)\big] 
          = \E\big[Y_{i}(1,M_{i}(1)) - Y_{i}(0,M_{i}(0))\big].
\end{equation}
In our notation, TE corresponds to effect of treatment offer on the
outcome, according to the structural model in equation \eqref{eq:total}, such that
$\tde=\theta_{R}$. The NDE, on the other hand, is defined as follows,
\begin{equation}\notag
    \nde:= \E[Y_{i}(1,M_{i}(0)) - Y_{i}(0,M_{i}(0))] 
        = \be_{R}.
\end{equation}
Finally, for continuous $Y_{i}$'s, the NIE can be expressed as a
difference between these two estimands. Formally, this gives
\begin{equation}\notag
    \nie := \E[Y_{i}(1,M_{i}(1)) - Y_{i}(1,M_{i}(0))] 
         = \theta_{R}-\be_{R}.
\end{equation}
This expression for the NIE with continuous outcomes is convenient,
because it covers both continuous and binary mediators. 

\subsection{OLS Estimator}\label{sec:ols}
Observe that since both the baseline covariates,
$X_{i}$'s, and the randomization variable, $R_{i}$'s, are exogenous,
it follows that parameters, $\ga_{R}$ and $\theta_{R}$,
can be unbiasedly estimated using OLS. However, there is no guarantee that the
effect of the mediator on the outcome is not confounded by an
unmeasured variable. Therefore, a naive OLS estimator of the
parameter, $\be_{M}$, may be biased. Similarly, the OLS estimator of
$\be_{R}$ may also be biased due to the endogeneity of the mediator. 
A simplified version of such a causal mediation model, in the presence
of a confounder, $U_{i}$'s, has been represented in Figure
\ref{fig:confounded model}. In Figure \ref{fig:confounded model},
$\be_{R}$ and $\be_{M}$ are biased due to unmeasured confounding,
since the intermediate variable, $M_{i}$'s, is endogenous in this
figure. 
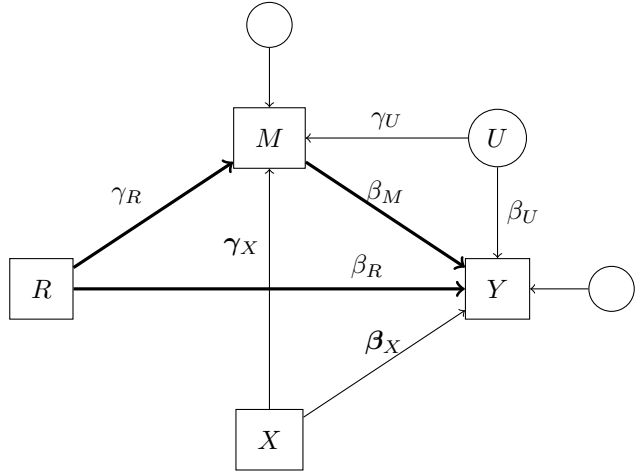
\begin{figure}[t]
\centering
\begin{tikzpicture}
    \draw (-3.0,-2) node[draw,inner sep=8pt](r){$R$};
    \draw (+0,+0) node[draw,inner sep=8pt](m){$M$};
    \draw (+0,-4) node[draw,inner sep=8pt](x){$X$};
    \draw (+3.0,-2) node[draw,inner sep=8pt](y){$Y$};
    \draw (+3.0,+0) node[draw,circle,inner sep=4pt](u){$U$};
    \draw (+0,+1.5) node[draw,circle,inner sep=6pt](d){};
    \draw (3.0+1.5,-2) node[draw,circle,inner sep=6pt](e){};
    \draw[very thick,->] (m) -- (y) node[midway,above]{$\be_{M}$};
    \draw[->] (u) -- (m) node[midway,anchor=south]{$\ga_{U}$};
    \draw[->] (u) -- (y) node[midway,anchor=west]{$\beta_{U}$};
    \draw[very thick,->] (r) -- (m) node[midway,anchor=south east]{$\ga_{R}$};
    \draw[very thick,->] (r) -- (y) node[near end,anchor=south]{$\be_{R}$};
    \draw[->] (x) -- (m) node[near end,anchor=north east]{$\bga_{X}$};
    \draw[->] (x) -- (y) node[midway,anchor=south]{$\bbe_{X}$};
    \draw[->] (e) -- (y) node[midway,anchor=north west]{};
    \draw[->] (d) -- (m) node[midway,anchor=north east]{};
\end{tikzpicture}\\
\vspace{.25cm}
\caption{Graphical representation of the mediation model
   described in equation (\ref{eq:model}) in the presence
   of a confounder, $U$; where empty circles denote an error
   terms. For continuous mediators, we have the
   following correspondence between
   the above notation and the standard
   Baron-Kenny notation: $a=\ga_{R}$, $b=\be_{M}$, and
   $c\pri=\be_{R}$; in which $R$ and $M$ denote treatment offer and
   the mediator, respectively. The three paths of interest in 
   mediation investigations, have been emphasized in bold. 
   \label{fig:confounded model}}
\end{figure}

Various sets of assumptions can be used in order to conduct causal mediation
analysis. For the OLS estimator, we will use a set of assumptions
referred to as sequential ignorability \citep{Imai2010}. For every
$r\in\{0,1\}$, and every $m\in\R$, sequential ignorability assumes that 
\begin{en}[leftmargin=1.75cm]\setlength\itemsep{0em}
  \item[(OLS--1)] $Y_{i}(r,m) \perp R_{i} \,|\, X_{i}$.
  \item[(OLS--2)] $M_{i}(r) \perp R_{i} \,|\, X_{i}$.
  \item[(OLS--3)] $Y_{i}(r,m) \perp M_{i} \,|\, X_{i}$.
\end{en}
These assumptions respectively state the following: 
ignorable treatment assignment in terms of the outcome, given
the covariates, (OLS--1); ignorable treatment assignment in terms of
the mediator, given covariates, (OLS--2); and ignorable mediator
assignment, given covariates (OLS--3).

Observe that, whenever the $R_{i}$'s correspond to random allocation to
treatment offer, as in our motivating trial, it then follows
that conditions (OLS--1) and (OLS--2) are automatically satisfied. The
fact that (OLS--1) holds in our setting, allows us to unbiasedly estimate 
the causal effect of treatment offer on the outcome, denoted by $\theta_{R}$;
using an OLS estimator, denoted by $\wti\theta_{R}$. Similarly, the fact
that (OLS--2) holds permits us to unbiasedly estimate the causal effect of
treatment offer on the mediator, denoted by $\ga_{R}$, using an OLS
estimator, denoted by $\wti\ga_{R}$.

Moreover, it is additionally assumed, for regulatory reasons, that the
following strict inequalities hold, $\p(R=r|X=x)>0$, and
$\p(M=m|R=r,X=x)>0$. Under the model for the potential outcomes described in equation
(\ref{eq:model potential}), the third assumption of sequential
ignorability given in (OLS--3) can be reformulated as follows,
\begin{equation}\notag
       Y_{i}(0,0),\be_{M,i},\be_{R,i} \perp M_{i} \,|\, X_{i}.
\end{equation}
Therefore, under sequential ignorability, the three random variables
on the RHS of equation (\ref{eq:model potential}) are assumed to be
conditionally independent of the mediator, given the values of the
baseline covariates. 

In the absence of unmeasured confounders between the $M_{i}$'s and the
$Y_{i}$'s, sequential ignorability 
holds, and one can estimate the NIE and NDE by computing the OLS
estimator of the direct effect of treatment offer on the outcome, denoted
$\be_{R}$. For convenience, the
parameters of interest in the model described in equation \eqref{eq:model}
will be collectively denoted as a vector, 
\begin{equation}\notag
    \bbe:=(\bbe_{X}\pri,\be_{M},\be_{R})\pri.
\end{equation}
Similarly, all the variables in this model will be expressed as the
random vector, 
\begin{equation}\notag
      V_{i}:= (X_{i}\pri,M_{i},R_{i})\pri,
\end{equation}
where recall that $X_{i}$ represents a $k$-dimensional column vector
of baseline covariates including an intercept, whereas $M_{i}$ and
$R_{i}$ are real-valued random variables, although $M_{i}$ is also
allowed to be binary. Thus, each $V_{i}$ is a 
$(k+2)$-dimensional random vector. In addition, a set of $n$
observations from the $Y_{i}$'s will be denoted by the vector $\by$,
while a set of $n$ realizations from the $V_{i}$'s will take the form
of a matrix of order $n\times(k+2)$, denoted $\bV$.

Under the further assumption that the matrix $\E[V_{i}V_{i}\pri]$ is
full-rank, we can compute the OLS estimator. 
\begin{en}[leftmargin=1.75cm]\setlength\itemsep{0em}
  \item[(OLS--4)] $\rank(\E[V_{i}V_{i}\pri])=k+2$.
\end{en}
The OLS estimator for $\bbe$, which is uniquely given by the vector that
minimizes the empirical MSE, and
takes the form, $\aols\bbe:=(\bV\pri\bV)^{-1}(\bV\pri\by)$. Moreover,
the empirical variance of this estimator is given by
$\wh\sig^{2}(\bV\pri\bV)^{-1}$, where 
$\wh\sig^{2}$ is defined as
$(\by-\bV\aols\bbe)\pri(\by-\bV\aols\bbe)/(n-k-2)$. Moreover, the vector
of parameters for the total effects,
$\bth:=(\bth_{X}\pri,\theta_{R})\pri$, from equation 
(\ref{eq:total}) can also be estimated using OLS, thereby producing the
following estimators of the natural effects: $\wti\nde := \aols\be_{R}$,
and $\wti\nie := \wti\theta_{R}-\aols\be_{R}$. (Hence, observe that
albeit we are estimating the full vector of parameters, $\bbe$; the
sole element of interest in this vector for estimating NDE and NIE is
$\be_{R}$.)

\subsection{TSLS Estimator}\label{sec:tsls}
In the presence (or suspected presence) of unmeasured
confounders, different assumptions are required in order to estimate the
parameters of interest without bias. In the data at hand, although
allocation to treatment has been randomized, both $M_{i}$ and
$Y_{i}$ are post-randomization variables, which may be affected by
common causes. Therefore, one cannot guarantee that the path from
the mediator to the outcome has not been confounded by an unobserved
variable. When unmeasured confounders affect the relationship between the
outcome and the mediator, as illustrated in Figure
\ref{fig:confounded model}, the third portion of sequential
ignorability, (OLS--3) does not hold, and further assumptions are
hence required to ensure that such a model is identifiable. 

Several groups of researchers have used instruments that are defined
as interactions between certain baseline variables and random assignment to treatment 
\citep{Dunn2007,TenHave2007,Albert2008}. Such choices of IVs require a
particular set of assumptions, which ensure that the resulting variables
constitute valid instruments. In this paper, we will consider a variant
of the conditions described by Small \citep{Small2012}. These assumptions apply to general
mediation models that make use of such interaction terms as instruments. 
For consistency with the previous literature on this topic, we will
also adopt some of the notation used by Small \citep{Small2012}, throughout
the rest of this section. However, we should emphasize that SPSL
estimation in causal mediation, is not restricted to the use of
interaction terms as instruments. 
\begin{figure}[t]
\centering
\begin{tikzpicture}[scale=1.2]
    \draw (-3,0) node[draw,inner sep=8pt](z){$RX$};
    \draw (+0,0) node[draw,inner sep=8pt](m){$M$};
    \draw (+4,0) node[draw,inner sep=8pt](y){$Y$};
    \draw (+1,-1.75) node[draw,inner sep=8pt](r){$R$};
    \draw (+3,-1.75) node[draw,inner sep=8pt](x){$X$};
    \draw (+0,+1.5) node[draw,circle,inner sep=6pt](d){};
    \draw (+4,+1.5) node[draw,circle,inner sep=6pt](e){};
    \draw (+2,+1.75) node[draw,circle,inner sep=4pt](u){$U$};
    \draw[->] (z) -- (m) node[midway,above]{$\bga_{RX}$};
    \draw[very thick,->] (m) -- (y) node[midway,above]{$\be_{M}$};
    \draw[very thick,->] (r) -- (m) node[midway,anchor=north east]{$\ga_{R}$};
    \draw[very thick,->] (r) -- (y) node[midway,anchor=south]{$\be_{R}$};
    \draw[->] (x) -- (m) node[midway,anchor=south]{$\bga_{X}$};
    \draw[->] (x) -- (y) node[midway,anchor=north west]{$\bbe_{X}$};
    \draw[->] (u) -- (m) node[midway,anchor=south east]{$\ga_{U}$};
    \draw[->] (u) -- (y) node[midway,anchor=south west]{$\beta_{U}$};
    \draw[->] (e) -- (y) node[midway,anchor=north west]{};
    \draw[->] (d) -- (m) node[midway,anchor=north east]{};
\end{tikzpicture}
\caption{Graphical representation of the instrumented mediation model described in
  equation (\ref{eq:instrumented model}), in which the relationship
  between the mediator, $M$, and the outcome $Y$, is confounded by the
  presence of an unknown variable $U$; where, as before, the empty
  circles denote error terms. The interaction instrument,
  $RX$, is here used to handle the endogeneity of $M$; while 
  the randomization variable, $R$, and the baseline covariates, $X$,
  are all assumed to be exogenous. As in Figure \ref{fig:confounded
    model}, the three links defining the main causal mediation model,
  have been emphasized in bold. 
  \label{fig:instrumented model}}
\end{figure}
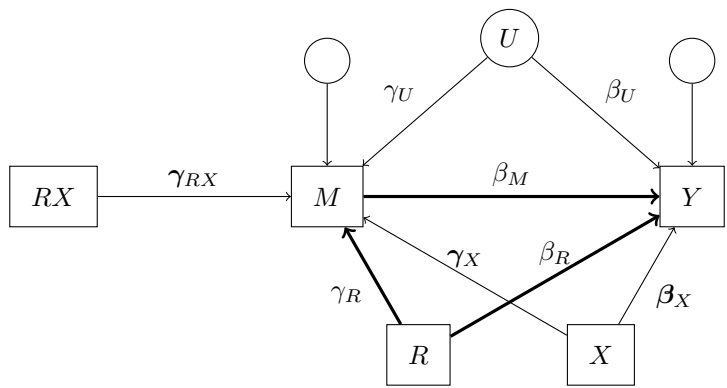

Here, we supplemented the model for the observed outcome, $Y_{i}$'s,
with a predictive model for the observed continuous (or binary) mediator,
$M_{i}$'s, such that we obtain the system of equations that has also
been illustrated graphically in Figure \ref{fig:instrumented model}, 
\begin{equation}\label{eq:instrumented model}
  \begin{aligned}
   Y_{i} &= \bbe_{X}\pri X_{i} + \be_{R}R_{i} + \be_{M}M_{i} + \ep_{i}, \\
   M_{i} &= \bga_{X}\pri X_{i} + \ga_{R}R_{i} + \bga_{RX}\pri R_{i}X_{i} + \de_{i};
  \end{aligned}
\end{equation}
with $\de_{i}:=\E[M_{i}|X_{i},R_{i},]-M_{i}$, and $\E[\de_{i}]=0$; and
where the $\ep_{i}$'s and the $\de_{i}$'s are assumed to be
independent. Furthermore, note that the $M_{i}$'s are here modelled linearly, despite the
fact that this variable may be binary. This does not pose a problem per
se, as long as the instruments, $R_{i}X_{i}$'s, are predictive of the
$M_{i}$'s. That is, mis-specification of the functional form of this
model (e.g.~as a linear regression, when, in reality, this is a logistic
regression), while leading to difficulties interpreting the gamma's; 
does not affect the estimation of the parameters of interest, which
are the beta's in the model for the $Y_{i}$'s, since TSLS estimation
solely requires a correct specification of the model for the
outcomes. Thus, the error terms, $\de_{i}$'s, of the linear model
for the mediator need not be normally distributed.

For convenience, we will define the set of instruments as the following vectors,
\begin{equation}\notag
      Z_{i}:= (X_{i}\pri,R_{i},R_{i}X_{i}\pri)\pri,
\end{equation}
where each such $Z_{i}$ is a $(2k+1)$-dimensional column vector.
Equipped with this notation, we can then state the assumptions
required to guarantee the validity of the $R_{i}X_{i}$'s as
instruments. We will assume that the following conditions hold
for every subject,
\begin{en}[leftmargin=1.75cm]\setlength\itemsep{0em}
  \item[(TSLS--1)] $M_{i}\perp\be_{M,i}|R_{i},X_{i}$.
  \item[(TSLS--2)] $\be_{R}=\E[\be_{R,i}|X_{i}]$, and $\be_{M}=\E[\be_{M,i}|X_{i}]$.
  \item[(TSLS--3)] $\E[Z_{i}Z_{i}\pri]$, and $\E[Z_{i}V_{i}\pri]$ are full-rank.
  \item[(TSLS--4)] $\cov(V_{i},Z_{i})\neq\bzero$.
\end{en}
Here, (TSLS--1) should be interpreted as the independence of the
individual mediator effects with the values taken by the mediator. 
Condition (TSLS--4) is commonly
referred to in the literature on causal inference, as the
\emph{relevance} of the IVs. 
In addition, observe that assumption (TSLS--2) is weaker than the ones made by
previous authors, who have used interaction terms as instruments, and who
have assumed homogeneous treatment
effects\citep{Dunn2007,TenHave2007,Albert2008}, such that the
$\beta_{M,i}$'s, and $\beta_{R,i}$'s are assumed to be identical
for all subjects. Here, by contrast, we have only required these
parameters to have identical conditional expectations conditional 
on the $X_{i}$'s, as stated in condition (TSLS--2).

Also, note that this set of assumptions slightly differs from the one described by Small
\citep{Small2012}, since we have replaced the assumption that this author
refers to as (IV--A1), by an assumption on the ranks of the matrices 
$\E[Z_{i}Z_{i}\pri]$, and $\E[Z_{i}V_{i}\pri]$, which we refer to as
(TSLS--3). The latter assumption is here expressed in terms
of the ranks of the expectations of the cross-products of the vector
of instruments, and the vector of covariates. This condition is a
relatively weak requirement that guarantees the identifiability of the
resulting TSLS estimator \citep{Wooldridge2002}. 

We can now show that the corresponding TSLS estimator weakly converges
to the target vector of the parameters of interest. Firstly, following
Small \citep{Small2012}, we demonstrate that the above assumptions are sufficient
to guarantee the exogeneity of the $R_{i}X_{i}$'s in model
\eqref{eq:instrumented model}. A proof of this proposition has been
relegated to the appendix. 
\begin{pro}\label{pro:uncorrelated}
    Under assumptions (TSLS--1) and (TSLS--2), and under the assumption that the
    $R_{i}$'s are exogenous with respect to the $Y_{i}$'s in model
    \eqref{eq:instrumented model}, we have $\cov(R_{i}X_{i},\ep_{i})=\bzero$.
\end{pro}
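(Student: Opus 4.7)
The plan is to expand $\ep_{i}$ via its definition in equation \eqref{eq:ep}, namely $\ep_{i} = \omega_{i} + (\be_{M,i}-\be_{M})M_{i} + (\be_{R,i}-\be_{R})R_{i}$, and to decompose $\cov(R_{i}X_{i},\ep_{i})$ into the three covariance contributions corresponding to this sum. Since $\E[\ep_{i}]=0$ follows from $\E[\omega_{i}]=0$, (TSLS--2), and the tower property, it is enough to show that each of $\E[R_{i}X_{i}\omega_{i}]$, $\E[R_{i}X_{i}(\be_{M,i}-\be_{M})M_{i}]$, and $\E[R_{i}X_{i}(\be_{R,i}-\be_{R})R_{i}]$ vanishes.

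For the $\omega_{i}$ contribution, randomization makes $R_{i}$ independent of the baseline pair $(X_{i},\omega_{i})$, so $\E[R_{i}X_{i}\omega_{i}] = \E[R_{i}]\,\E[X_{i}\omega_{i}]$, and the factor $\E[X_{i}\omega_{i}]$ is $\bzero$ because $\bbe_{X}$ is the population regression coefficient of $Y_{i}(0,0)$ on $X_{i}$ (with intercept absorbed into $X_{i}$), making $\omega_{i}$ orthogonal to $X_{i}$. For the term involving $(\be_{M,i}-\be_{M})M_{i}$, I would condition on $(R_{i},X_{i})$ and apply (TSLS--1), which provides the conditional factorization $\E[(\be_{M,i}-\be_{M})M_{i}\mid R_{i},X_{i}] = \E[\be_{M,i}-\be_{M}\mid R_{i},X_{i}]\cdot\E[M_{i}\mid R_{i},X_{i}]$. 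Randomization of $R_{i}$ implies $\be_{M,i}\perp R_{i}\mid X_{i}$, so $\E[\be_{M,i}\mid R_{i},X_{i}] = \E[\be_{M,i}\mid X_{i}] = \be_{M}$ by (TSLS--2), and the first factor vanishes, killing the outer expectation. The final term is handled analogously: independence of $R_{i}$ from $(X_{i},\be_{R,i})$ yields $\E[R_{i}^{2}]\,\E[X_{i}(\be_{R,i}-\be_{R})]$, which collapses to zero after applying (TSLS--2) inside a tower step on $X_{i}$; the analogous manipulation handles the subtracted product $\E[R_{i}X_{i}]\,\E[(\be_{R,i}-\be_{R})R_{i}]$.

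The main subtle step is the passage from the marginal statement of (TSLS--2), $\be_{M} = \E[\be_{M,i}\mid X_{i}]$, to the strengthened identity $\E[\be_{M,i}\mid R_{i},X_{i}] = \be_{M}$ (and likewise for $\be_{R,i}$). This is precisely where the exogeneity hypothesis on $R_{i}$ enters essentially, since random treatment allocation is what supplies the conditional independence $\be_{M,i}\perp R_{i}\mid X_{i}$ between individual-level slopes and treatment assignment. Once this is pinned down, the remainder of the proof is a careful bookkeeping exercise combining conditional independences with repeated applications of the tower property.
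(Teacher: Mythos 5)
Your proposal is correct and follows essentially the same route as the paper's proof: the same three-term decomposition of $\ep_{i}$, the same use of randomization to factor out $R_{i}$, the same tower-property manipulations, and the same observation that (TSLS--1) supplies the conditional factorization while (TSLS--2) plus randomization upgrade $\E[\be_{M,i}\mid X_{i}]=\be_{M}$ to $\E[\be_{M,i}\mid R_{i},X_{i}]=\be_{M}$. The only cosmetic difference is that you justify $\E[X_{i}\omega_{i}]=\bzero$ via the population-regression orthogonality of the residual, whereas the paper uses the identity $\omega_{i}=Y_{i}(0,0)-\E[Y_{i}(0,0)\mid X_{i}]$ directly with a tower step; given the paper's definition of $\omega_{i}$ these coincide.
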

In the context of trials, observe that the exogeneity of the $R_{i}$'s
is automatically satisfied. 
It then follows that the TSLS estimator, $\atsls\bbe$, can be computed with respect to
the matrix $\wh\bV$, such that $\atsls\bbe := (\wh\bV\pri\wh\bV)^{-1}(\wh\bV\pri\by)$,
where $\wh\bV$ denotes the projected matrix of the variables in
the second-stage equation with respect to the matrix of instruments,
$\bZ$. Analogously to the OLS, the variance of the estimator is then
given by $\wti\sig^{2}(\bV\pri\bV)^{-1}$, where $\wti\sig^{2}$ is
defined as $(\by-\bV\atsls\bbe)\pri(\by-\bV\atsls\bbe)/(n-k-2)$.
The consistency of the TSLS estimator, can then immediately be derived.
\begin{pro}\label{pro:tsls consistency}
    Under conditions (TSLS--1) to (TSLS--4), and under the assumptions that both the
    $R_{i}$'s and the $X_{i}$'s are exogenous with respect to the $Y_{i}$'s in model
    \eqref{eq:instrumented model}; we have
    $\atsls\bbe\stack{p}{\to}\bbe$. 
\end{pro}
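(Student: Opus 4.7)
The plan is to express the TSLS estimator in closed form and then invoke the weak law of large numbers together with the continuous mapping theorem. Writing the second-stage equation in matrix form as $\by = \bV\bbe + \bep$ with $\bep := (\ep_{1},\ldots,\ep_{n})\pri$, and using the fact that the projection of $\bV$ onto the column space of $\bZ$ equals $\wh\bV = \bZ(\bZ\pri\bZ)^{-1}\bZ\pri\bV$, the estimator admits the usual sandwich form
\begin{equation*}
\atsls\bbe = \bbe + \bigl(\bV\pri\bZ(\bZ\pri\bZ)^{-1}\bZ\pri\bV\bigr)^{-1}\bV\pri\bZ(\bZ\pri\bZ)^{-1}\bZ\pri\bep.
\end{equation*}
Rescaling each factor by $1/n$, consistency reduces to two tasks: (i) showing that $n^{-1}\bZ\pri\bep \stack{p}{\to} \bzero$, and (ii) identifying the probability limits of $n^{-1}\bZ\pri\bZ$ and $n^{-1}\bZ\pri\bV$, and confirming that the relevant products are non-singular.

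The first and crucial step is to verify that $\E[Z_{i}\ep_{i}] = \bzero$. Decomposing $Z_{i} = (X_{i}\pri,R_{i},(R_{i}X_{i})\pri)\pri$, the assumed exogeneity of the $X_{i}$'s in \eqref{eq:instrumented model} gives $\E[X_{i}\ep_{i}] = \bzero$; the exogeneity of the $R_{i}$'s, which is automatic under randomization, gives $\E[R_{i}\ep_{i}] = 0$; and Proposition \ref{pro:uncorrelated} supplies $\E[R_{i}X_{i}\ep_{i}] = \bzero$ under (TSLS--1) and (TSLS--2), since $\E[\ep_{i}] = 0$ under the model for the potential outcomes. Stacking the three blocks yields $\E[Z_{i}\ep_{i}] = \bzero$.

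Assuming i.i.d.~sampling across subjects, the weak law of large numbers then delivers $n^{-1}\bZ\pri\bZ \stack{p}{\to} \E[Z_{i}Z_{i}\pri]$, $n^{-1}\bZ\pri\bV \stack{p}{\to} \E[Z_{i}V_{i}\pri]$, and $n^{-1}\bZ\pri\bep \stack{p}{\to} \bzero$. Under condition (TSLS--3), $\E[Z_{i}Z_{i}\pri]$ is invertible and $\E[Z_{i}V_{i}\pri]$ has full column rank, so the limiting matrix $\E[V_{i}Z_{i}\pri](\E[Z_{i}Z_{i}\pri])^{-1}\E[Z_{i}V_{i}\pri]$ is a non-singular $(k+2)\times(k+2)$ matrix. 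Applying the continuous mapping theorem to invert this limit, and Slutsky's theorem to combine the factors, then yields $\atsls\bbe - \bbe \stack{p}{\to} \bzero$.

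The main technical hurdle is the verification that $\E[R_{i}X_{i}\ep_{i}] = \bzero$, because $\ep_{i}$ is not a conventional regression residual but a composite error term from \eqref{eq:ep} that absorbs the subject-level fluctuations $(\be_{M,i}-\be_{M})M_{i}$ and $(\be_{R,i}-\be_{R})R_{i}$; these interact non-trivially with the instrument $R_{i}X_{i}$ and require the random-coefficient assumptions (TSLS--1) and (TSLS--2). Since Proposition \ref{pro:uncorrelated} already dispatches this step, the remainder of the argument reduces to the standard TSLS consistency machinery, with condition (TSLS--4) entering implicitly through (TSLS--3) to prevent $\E[Z_{i}V_{i}\pri]$ from being rank-deficient, which would otherwise make the limiting Gram matrix singular.
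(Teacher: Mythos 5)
Your proposal is correct and follows essentially the same route as the paper: both arguments reduce the problem to verifying $\cov(Z_{i},\ep_{i})=\bzero$ blockwise --- exogeneity of the $X_{i}$'s and $R_{i}$'s for the first two blocks, and Proposition \ref{pro:uncorrelated} for the interaction block --- and then invoke the standard TSLS consistency machinery, which the paper simply delegates to Wooldridge's Chapter 5 while you write it out explicitly (sandwich form, law of large numbers, rank condition (TSLS--3), continuous mapping and Slutsky). The only difference is one of completeness, not of substance.
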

As before, the proof of this proposition is provided in the appendix.
It then suffices to plug in this estimator of $\bbe$ in our
definitions of the natural effects, in order to
construct the TSLS estimators for these causal estimands, such that we
obtain $\wh\nde := \atsls\be_{R}$, and $\wh\nie :=
\wti\theta_{R}-\atsls\be_{R}$; where note that $\wti\theta_{R}$ is still
estimated using OLS, since the randomization variable, $R$, is
assumed to be exogenous with respect to the mediator, $M$. Moreover,
observe that these TSLS estimators of the NDE and NIE solely rely on the
TSLS estimator of $\be_{R}$.

\subsection{SPSL Estimator}\label{sec:sps}
As we have seen, the OLS and the TSLS estimators satisfy competing, yet
complementary demands. Under assumptions (OLS--1), (OLS--2), and (OLS--4), the OLS will be
asymptotically efficient but possibly biased, whereas under
assumptions (TSLS--1) to (TSLS--4), the TSLS will be
asymptotically unbiased but relatively inefficient. Thus, it is
natural to try to strike a trade-off between these two estimators, by
considering affine combinations of the form 
\begin{equation}\notag
     \bar\bbe_{\al} := \al\atsls\bbe + (1-\al)\aols\bbe,
\end{equation}
where recall that $\atsls\bbe$ and $\aols\bbe$ denote the TSLS and OLS
estimators, respectively. 
and where $\al$ needs not be comprised between 0 and 1, but may take any
real values. This family of estimators are sometimes referred to as
semi-parametric Stein-like (SPSL) estimators, for reasons which will
become clear in the sequel \citep{Judge2004}. 

In this framework, the shrinkage parameter, $\al$, is commonly
selected as the value that minimizes an empirical estimate of the MSE
of $\bar\bbe_{\al}$. 
However, in many circumstances, it may be desirable to optimize such a
trade-off with respect to a subset of the parameters of interest. This may
be achieved by pre-multiplying the vectors of estimators and estimands
with the matrix of an orthogonal projection, which will select the
particular subset of parameters that one wishes to emphasize. That is,
given a projection, $\bP$, we may consider the MSE of the vector
\begin{equation}\notag
     \bP(\bar\bbe_{\al} - \bbe) = (\bP\bar\bbe_{\al} - \bP\bbe).
\end{equation}
The shrinkage parameter, $\al$, is defined as the value that
minimizes the trace of the MSE of that projected vector, which is given by
\begin{equation}\notag 
     \tr\mse(\bP\bar\bbe_{\al})
     := \tr\E\big[\bP(\bar\bbe_{\al}-\bbe)(\bar\bbe_{\al}-\bbe)\pri\bP\pri\big].
\end{equation}
The use of a projection in this setting can be regarded as a generalization
of the original SPSL framework introduced by Judge and Mittelhammer\citep{Judge2004}. Before
turning to the minimization of that quantity, we describe a particular
decomposition of the MSE of the SPSL estimator.

Using $\bP\bar\bbe_{\al} = \al\bP\atsls\bbe + (1-\al)\bP\aols\bbe$, one
can show that the MSE of $\bP\bar\bbe_{\al}$ can be decomposed into a
weighted combination of the MSEs for the projected OLS and 
TSLS estimators. That is, for every $\al$, and every projection,
$\bP$, we obtain,
\begin{equation}\label{eq:mse decomposition}
     \mse(\bP\bar\bbe_{\al}) = 
     \al^{2}\mse(\bP\atsls\bbe) + \al(1-\al)\cse(\bP\atsls\bbe,\bP\aols\bbe)
     + (1-\al)^{2}\mse(\bP\aols\bbe),
\end{equation}
where the \textit{cross sum of squares},
$\cse(\bP\atsls\bbe,\bP\aols\bbe)$ is defined as 
$\E[\bP(\atsls\bbe-\bbe)(\aols\bbe-\bbe)\pri\bP\pri]$.
The theoretical parameter, $\al$, controlling the respective
contribution of the OLS and TSLS estimators is then defined as the
following minimizer, 
\begin{equation}\label{eq:al}
     \al := \argmin_{\al\in\R} \tr\mse(\bP\bar\bbe_{\al}).
\end{equation}
This parameter can be shown to be available in closed-form. This
follows from the fact that the MSE of $\bP\bar\bbe_{\al}$ is a convex
function of $\al$. In the following proposition, for every estimator
$\bbe^{\dagger}$, the quantity $(\tr\mse(\bbe^{\dagger}))^{1/2}$ is
referred to as the trace RMSE of $\bbe^{\dagger}$. A proof of this
proposition is provided in the appendix. 
\begin{pro}\label{pro:sps}
    For every $n$, and every $\bP$; the parameter $\al$
    from equation (\ref{eq:al}) is 
    \begin{equation}\notag
          \al = 
          \frac{\tr(\mse(\bP\atsls\bbe) - \cse(\bP\atsls\bbe,\bP\aols\bbe))}
          {\tr(\mse(\bP\atsls\bbe)-2\cse(\bP\atsls\bbe,\bP\aols\bbe)+\mse(\bP\aols\bbe))}.
    \end{equation}
    If, in addition, the random vectors, $\atsls\bbe$ and $\aols\bbe$ are
    elementwise squared-integrable, then $\al$ is unique whenever the
    trace RMSEs of $\bP\atsls\bbe$ and $\bP\aols\bbe$ are not equal. 
\end{pro}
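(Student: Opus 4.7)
The plan is to exploit the fact that $\tr\mse(\bP\bar\bbe_{\al})$ is a quadratic function of the scalar $\al$, and then minimize it via the standard first-order condition. Taking the trace in equation \eqref{eq:mse decomposition} and collecting powers of $\al$ produces a polynomial $q(\al) = a\al^{2}+b\al+c$, whose coefficients $a, b, c$ are explicit linear combinations of the three scalar quantities $\tr\mse(\bP\atsls\bbe)$, $\tr\mse(\bP\aols\bbe)$ and $\tr\cse(\bP\atsls\bbe,\bP\aols\bbe)$, and do not depend on $\al$. The first-order condition $q\pri(\al)=0$ is linear in $\al$, and solving it produces the closed-form expression claimed in the statement, provided the leading coefficient $a$ is nonzero.

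The second step is to confirm that this stationary point is a minimizer rather than a maximizer. A short expansion of $(\atsls\bbe-\aols\bbe)$ as $(\atsls\bbe-\bbe)-(\aols\bbe-\bbe)$, combined with the definitions of $\mse$ and $\cse$ and the identity $\tr\cse(\bP\atsls\bbe,\bP\aols\bbe)=\tr\cse(\bP\aols\bbe,\bP\atsls\bbe)$ that holds by invariance of trace under transposition, identifies $a$ with
\[
\tr\E[\bP(\atsls\bbe-\aols\bbe)(\atsls\bbe-\aols\bbe)\pri\bP\pri].
\]
The integrand is positive semi-definite pointwise, so this quantity is non-negative; hence $a\geq 0$, and the stationary point is a minimizer whenever $a>0$.

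For the uniqueness claim, the task thus reduces to showing that $a>0$ whenever the two trace RMSEs disagree. The key tool is the Cauchy--Schwarz inequality applied to the $L^{2}$ inner product $\langle U,V\rangle:=\E[\tr(UV\pri)]$ on the space of square-integrable random vectors; the elementwise square-integrability of $\atsls\bbe$ and $\aols\bbe$ guarantees that both $\bP(\atsls\bbe-\bbe)$ and $\bP(\aols\bbe-\bbe)$ belong to this space. Cauchy--Schwarz then yields
\[
|\tr\cse(\bP\atsls\bbe,\bP\aols\bbe)|\;\leq\;\sqrt{\tr\mse(\bP\atsls\bbe)}\,\sqrt{\tr\mse(\bP\aols\bbe)},
\]
from which one obtains the sharp bound $a\geq (\sqrt{\tr\mse(\bP\atsls\bbe)}-\sqrt{\tr\mse(\bP\aols\bbe)})^{2}$. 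The right-hand side is strictly positive precisely when the two trace RMSEs differ, so in that case $a>0$ and the minimizer $\al$ is unique.

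The main obstacle is keeping the Cauchy--Schwarz step rigorous: one has to justify treating the trace cross-sum-of-squares as a genuine inner product between random vectors in $L^{2}$, which is exactly where the square-integrability hypothesis enters the argument. The remaining manipulations are routine quadratic optimization.
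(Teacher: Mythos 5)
Your proposal is correct and follows essentially the same route as the paper: minimize the quadratic $\tr\mse(\bP\bar\bbe_{\al})$ in $\al$ via the first-order condition, then establish strict convexity (hence uniqueness) by interpreting the trace MSEs and trace CSE as squared norms and an inner product in $L^{2}$ and applying Cauchy--Schwarz to bound the leading coefficient below by $\big(\sqrt{\tr\mse(\bP\atsls\bbe)}-\sqrt{\tr\mse(\bP\aols\bbe)}\big)^{2}$. Your additional observation that the leading coefficient equals $\tr\E\big[\bP(\atsls\bbe-\aols\bbe)(\atsls\bbe-\aols\bbe)\pri\bP\pri\big]$, giving non-negativity immediately from positive semi-definiteness, is a small but pleasant shortcut the paper does not use, though both arguments still rely on Cauchy--Schwarz for the strict inequality.
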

In order to estimate the shrinkage parameter from the data, we need
to construct a consistent estimator of the bias of
$\bP\bar\bbe_{\al}$. Indeed, the MSE of that estimator can be
decomposed as follows,
\begin{equation}\notag
     \mse(\bP\bar\bbe_{\al}) = \var(\bP\bar\bbe_{\al}) +
     \bias^{2}(\bP\bar\bbe_{\al}),
\end{equation}
where $\bias^{2}(\bP\bar\bbe_{\al}):=(\E[\bP\bar\bbe_{\al}]-\bP\bbe)
(\E[\bP\bar\bbe_{\al}]-\bP\bbe)\pri$.
In general, the second term in the latter equation will not be
directly available. Nonetheless, one can show that the assumptions
that were made to guarantee the validity of the instruments used 
in section \ref{sec:tsls}, will also be sufficient to provide us with a consistent
estimator of the bias of $\bP\bar\bbe_{\al}$. Indeed, since by
proposition \ref{pro:tsls consistency}, we have seen that the TSLS
estimator converges in probability to the true parameter, $\bbe$; it
follows that this particular estimator can be used in the place of the
true parameter in order to produce a consistent estimator of the bias
of $\bP\bar\bbe_{\al}$. That is, we can define the empirical bias of
the projected SPSL estimator as follows,
\begin{equation}\notag
    \wh{\bias}(\bP\bar\bbe_{\al}) := \bP\bar\bbe_{\al}- \bP\atsls\bbe.
\end{equation}
The CSE from proposition \ref{pro:sps} can be estimated in an
analogous fashion. Therefore, the consistency of the TSLS estimator
guarantees the consistency of the SPSL estimator. 

The choice of terminology for this family of estimator can be
justified by observing that the expression for $\al$ in proposition
\ref{pro:sps} bears some similarities with the theory of Stein
estimators \citep{Efron1973}. Indeed, the empirical version of the
formula for the shrinkage parameter can be expressed as follows, 
\begin{equation}\notag
    \wh\al = \frac{\tr(\wh\var(\bP\atsls\bbe) -
      \wh\cse(\bP\atsls\bbe,\bP\aols\bbe))}{||\bP(\aols\bbe-\atsls\bbe)||^{2}},
\end{equation}
where $||\cdot||$ denotes the $L_{2}$-norm on $\R^{k+2}$, with respect
to the empirical joint distribution of the data. Using this
expression, we can then formulate the SPSL estimator as a weighted
deviation from the unbiased TSLS estimator, shrank toward
the OLS estimator, 
\begin{equation}\notag
     \bar\bbe_{\wh\al} = \atsls\bbe -
     \frac{\wh\tau}{||\bP(\atsls\bbe-\aols\bbe)||^{2}}
     (\aols\bbe - \atsls\bbe),
\end{equation}
in which $\wh\tau:=\tr(\wh\var(\bP\atsls\bbe) -
\wh\cse(\bP\atsls\bbe,\bP\aols\bbe))$, and where observe that we have made
implicit the dependence of the LHS in the latter equation on
$\bP$. Indeed, $\bar\bbe_{\wh\al}$ is solely dependent on the
projection, $\bP$, through the value of $\wh\al$, since we have
$\wh\al=\wh\tau/||\bP(\atsls\bbe-\aols\bbe)||^{2}$.

The relationship between the SPSL estimator
and the traditional Stein estimators has been studied by
previous authors. See Judge and Mittelhammer\citep{Judge2013}, for
instance. One can also observe that under the additional assumption that the random
vectors, $\atsls\bbe$ and $\aols\bbe$, are elementwise squared integrable;
it follows that we can obtain a central limit theorem for the
SPSL estimator dependent on $\bP$. This would generalize a previous result by
Judge and Mittelhammer \citep{Judge2013} for the standard SPSL estimator. 

As for the OLS and TSLS estimators, the natural causal effects of the
experimental manipulation onto the outcome, can
be estimated using the components of the SPSL estimator, $\bar\bbe_{\wh\al}$,
such that we obtain $\ov\nde := \bar\be_{R}$, and $\ov\nie :=
\wti\theta_{R}-\bar\be_{R}$; where note that, as for the TSLS natural
effects, the quantity $\wti\theta_{R}$ is still estimated using the OLS
estimator. 

For the analysis of the PROSPECT data set, since the
estimations of both the NDE and the NIE rely on this
quantity, it follows that the main
parameter of interest is $\be_{R}$. We have here arranged
the variables in this model according to $V_{i}=
(X_{i}\pri,M_{i},R_{i})\pri$. Thus, the projection matrix, $\bP$, will
be defined as a null matrix with a single non-null value in the last
element of its diagonal (that is, $P_{ij}=0$ holds every element in
$\bP$, apart from $P_{k+2,k+2}=1$); thereby estimating the shrinkage
parameter solely on the basis of the respective values taken by
$\aols\be_{R}$ and $\atsls\be_{R}$.

\section{Simulations}\label{sec:sim}
We now present a simulation study, which compares the OLS and TSLS with the
combined estimator, SPSL. We generate data from 
a confounded mediation model augmented
with an instrumental variable. The design of this simulation
experiment is partly motivated by the model fitted to the PROSPECT data set
analyzed in the sequel. Note, however, that in our simulations, the
mediator is assumed to be continuous, whereas that same variable is
dichotomous in the PROSPECT data. The effect of treatment on the
endogenous mediator is allowed to vary according to the values taken
by the baseline variables. Apart from this source of variation, the
effects are assumed to be homogeneous in these simulations. 
\begin{figure}[htbp]
  \centering
  \textbf{(A) Natural Direct Effect}\\
  \includegraphics[width=13cm]{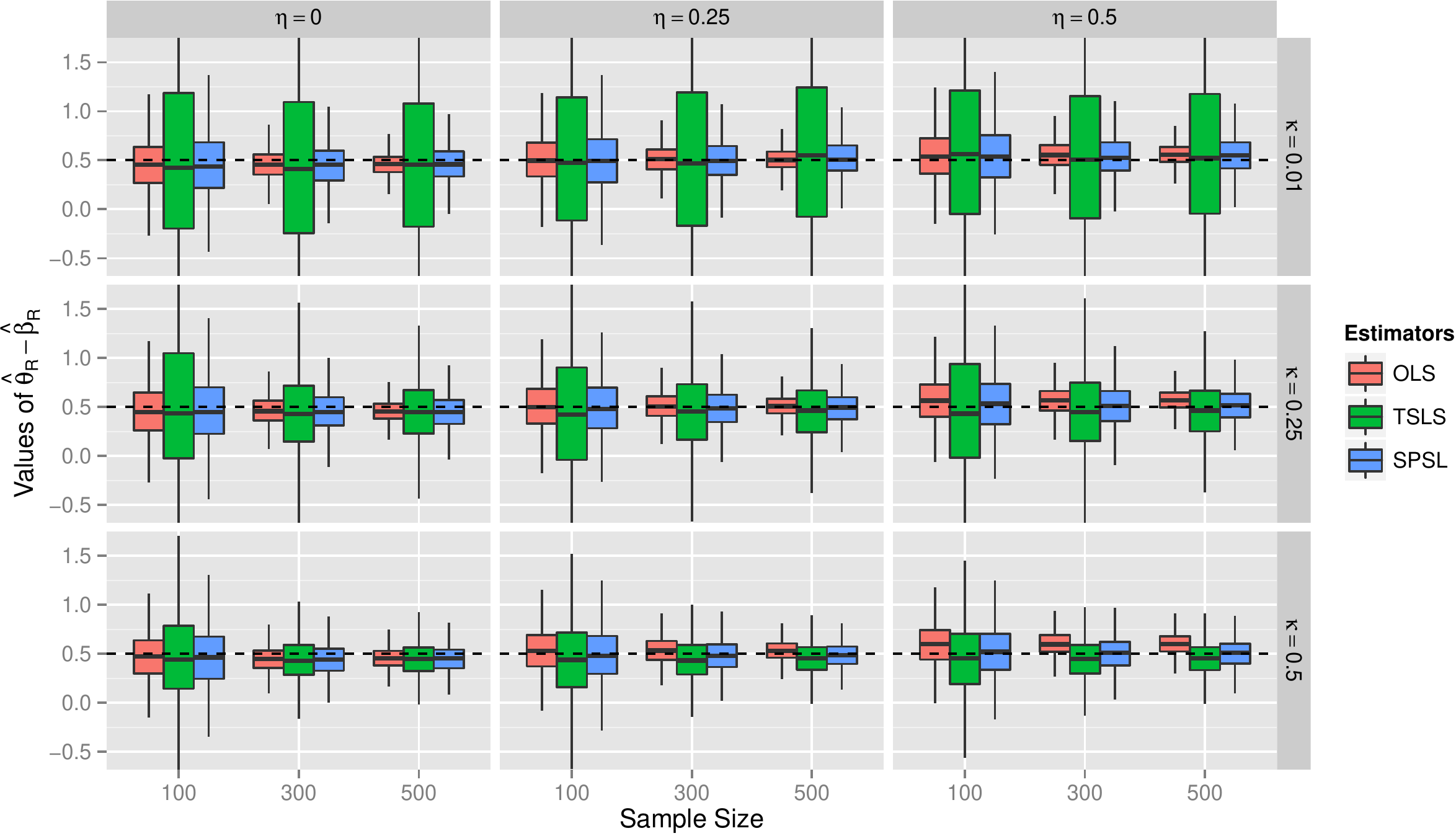}
  \vspace{.25cm}\\
  \textbf{(B) Natural Indirect Effect}\\
  \includegraphics[width=13cm]{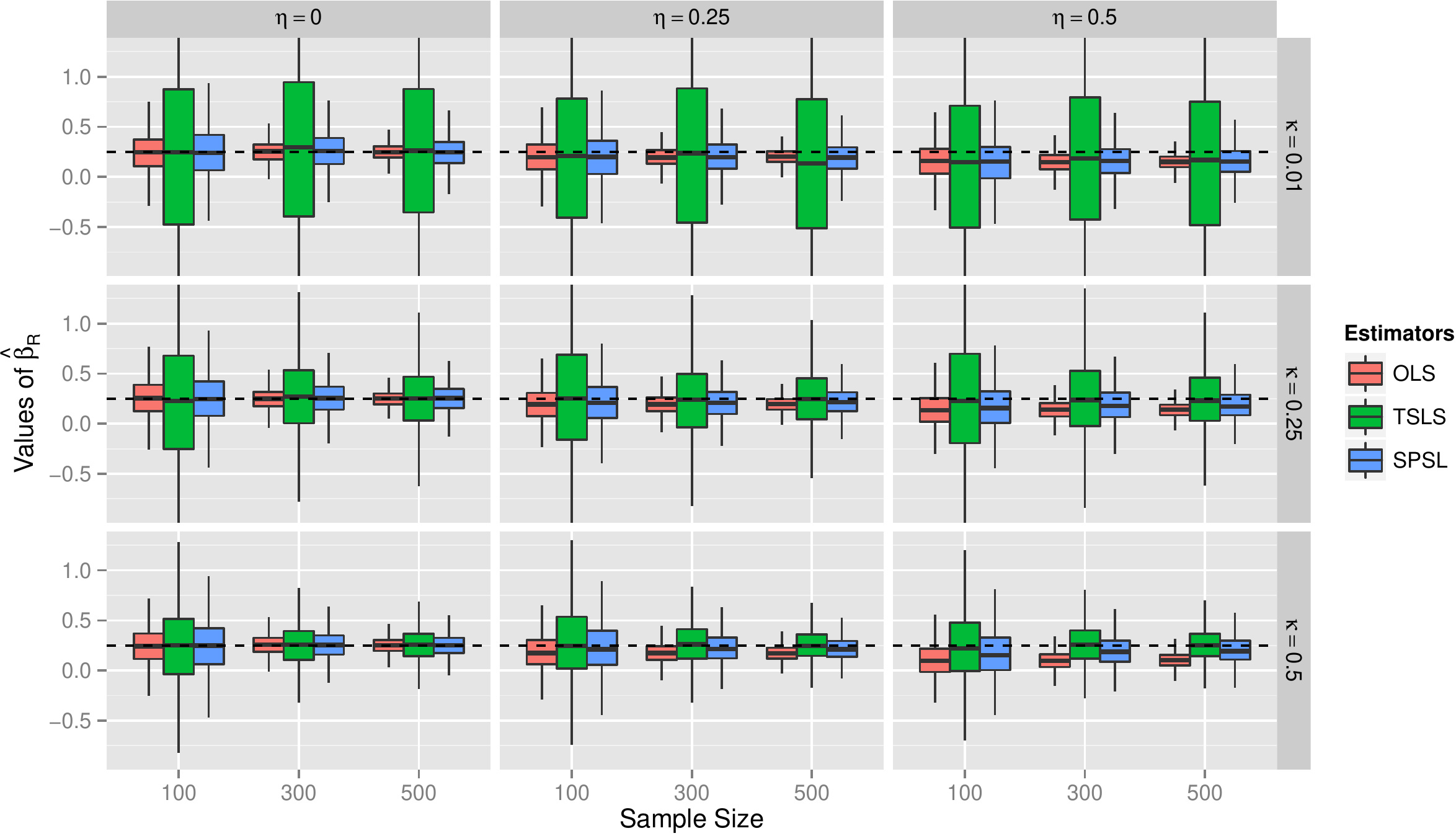}
   \caption{Monte Carlo distributions of estimators' values
     of the three estimators of interest under the simulation scenarios described
     in Figure \ref{fig:instrumented model}, for the NDE, $\be_{R}$,
     and NIE, $\theta_{R}-\be_{R}$, in
     panels (A) and (B), respectively. The simulations are reported
     for different degrees of confounding, and varying levels of
     instrument's strength, measured by $\eta$ and $\kappa$,
     respectively. These results are based on $10^5$ iterations in
     each condition. The dashed lines indicate the values of the true
     NDE and NIE, in panels (A) and (B), respectively.
     \label{fig:beta}}
\end{figure}

\subsection{Mediation Model}\label{sec:sim model}
Our objective in constructing our simulation model is
twofold. Firstly, we wish to be able to control the degree of
endogeneity of the mediator, as well as the strength of the instrument;
such that both factors can be varied independently of each
other. Secondly, we will also require the variances of the response,
$Y_{i}$'s, and of the intermediate variable, $M_{i}$'s, to be equal to
1, to be able to interpret the size of the effect on a standardized
scale. 

As represented in Figure \ref{fig:instrumented model}, we formulate
the following structural model for the clinical outcome,
\begin{equation}\notag
   Y_{i} = \be_{X}X_{i} + \be_{R}R_{i} + \be_{M}M_{i} +
      \be_{U}U_{i} + \ep_{i};
\end{equation}
for every $i=1,\ldots,n$. (Note that, contrary to the model in
Equation \eqref{eq:model}, the $\ep_{i}$'s in this simulation model
are uncorrelated with the $U_{i}$'s.)
As previously mentioned, in order facilitate
interpretability, we will fix the variance of the response variable to
be equal to 1 for all scenarios. The variance of the intermediate
variable, $M_{i}$'s, will also be constrained to be unity. Both
of these objectives will be achieved by controlling the variances of the error
terms, $\ep_{i}$'s in the above model; and $\de_{i}$'s in the
following model for the intermediate variable, 
\begin{equation}\notag
     M_{i} = \ga_{X}X_{i} + \ga_{R}R_{i} + \ga_{RX}
              R_{i}X_{i} + \ga_{U}U + \de_{i}.
\end{equation}
(Note again that the $\de_{i}$'s in the above simulation model for the
mediator are uncorrelated with the $U_{i}$'s.)
The variance of the $\de_{i}$'s is defined as a function of the
parameters in the equation for the $M_{i}$'s, such that 
$\sig_{\de}^{2}(\ga_{X},\ga_{R},\ga_{RX},\ga_{U}):=\var(\de)$. This
function will be defined in the sequel. For convenience, we will
simulate a single baseline covariate, denoted by $X_{i}$. This
baseline covariate is given the following distribution,
$X_{i}\stack{\iid}{\sim}N(0,2)$; where the variance was arbitrarily
fixed to two, in order to simplify some of our computations. In addition,
the experimental factor is drawn from a Bernoulli distribution, taking
the form, $R_{i}\stack{\iid}{\sim}\bern(1/2)$. Finally, the unmeasured
confounder is also generated from a unit normal distribution, such
that $U_{i}\stack{\iid}{\sim}N(0,1)$.

In this model, the $X_{i}$'s are assumed to be independent of other
observed baseline variables, such that $X_{i}\perp
R_{i}$; and the confounders, denoted by $U_{i}$'s, are assumed to
solely affect the relationship between the outcome and the mediator,
such that we also have $U_{i}\perp X_{i},R_{i},R_{i}X_{i}$. These
assumptions, combined with our constraints on the variances of the
$Y_{i}$'s and the $M_{i}$'s, can be used to compute a range of
possible values for the parameters of interest. A description of the
specific computations involved in this derivation has been relegated
to an appendix. (See Appendix B, for the details of the computation of
the variance of the error terms, $\sig^{2}_{\ep}$ and $\sig^{2}_{\de}$.)
Throughout these simulations, the parameters controlling the effect of
the $X_{i}$'s and $R_{i}$'s have been set to $\ga_{X}:=1/4$, and
$\ga_{R}:=1/\sqrt{2}$, respectively. These choices of parameters
correspond to small to moderate effect sizes. 
For convenience, we have further set the coefficients of
the structural model for the $Y_{i}$'s to take the same value, 
$\be_{X}=\be_{R}=\be_{M}=\be_{U}=1/4$. It then follows that in order to guarantee
$\sig^{2}_{\ep}>0$, we need to choose $\ga_{U}$, as satisfying
$\ga_{U}\leq1/2$, as well as, $\ga_{X}+\ga_{RX}\leq1/2$. 

The two main factors that are manipulated in
this simulation study are the degree of confounding of the mediator, and the
strength of the instrument. These simulation factors are respectively quantified
using the correlation of the intermediate variable, $M_{i}$'s, with
the confounders, $U_{i}$'s; and with the instruments,
$R_{i}X_{i}$'s. Owing to our choice of normalization, these
two correlations can be expressed as follows,
\begin{equation}\notag
     \Cor(M_{i},U_{i}) = \eta, 
     \qquad\te{and}\qquad
     \Cor(M_{i},R_{i}X_{i}) = \kap; 
\end{equation}
where it can be verified that $\eta=\ga_{U}$, and $\kap=\ga_{X}+\ga_{RX}$.
Under the additional constraint that both $\sig^{2}_{\ep}$ and
$\sig^{2}_{\de}$ are positive, it follows
that we can select $\eta$ to take values in the set
$\{0.0,0.25,0.50\}$, which represent different choices for the
degree of confounding (none, moderate, and
strong on a correlation scale); 
and $\kap$ to take values in the set $\{0.01,0.25,0.50\}$, which
represent different choices for the strength of the
instrument (weak, moderate, and strong also on a correlation scale).
Observe that the correlation between the mediator and its
instrument, $\ka$, must be non-zero; in order to ensure that the TSLS
estimator is well-identified in all scenarios. 
\begin{figure}[htbp]
  \centering
  \textbf{(A) Natural Direct Effect}\\
  \includegraphics[width=13cm]{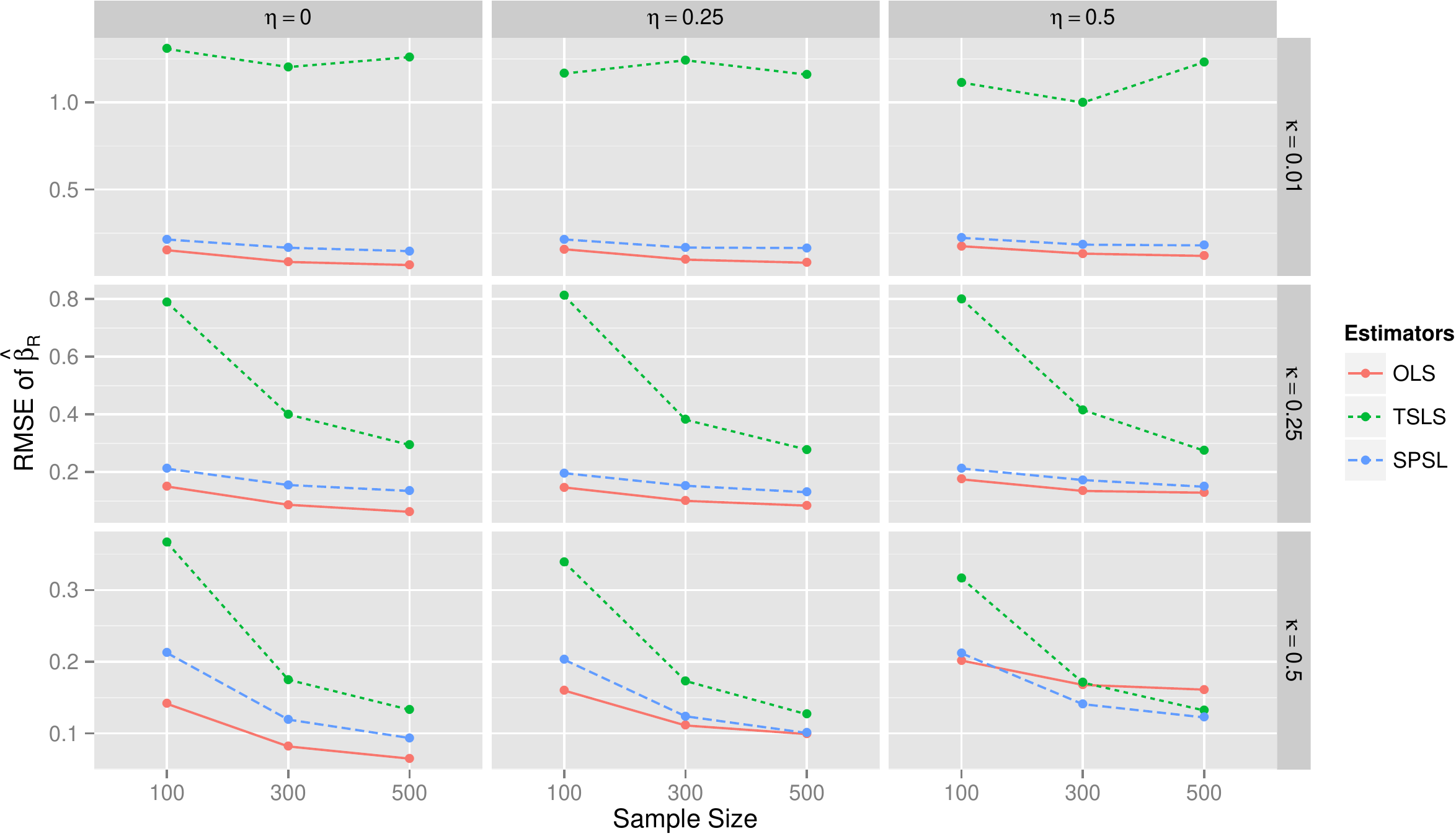}
  \vspace{.25cm}\\
  \textbf{(B) Natural Indirect Effect}\\
  \includegraphics[width=13cm]{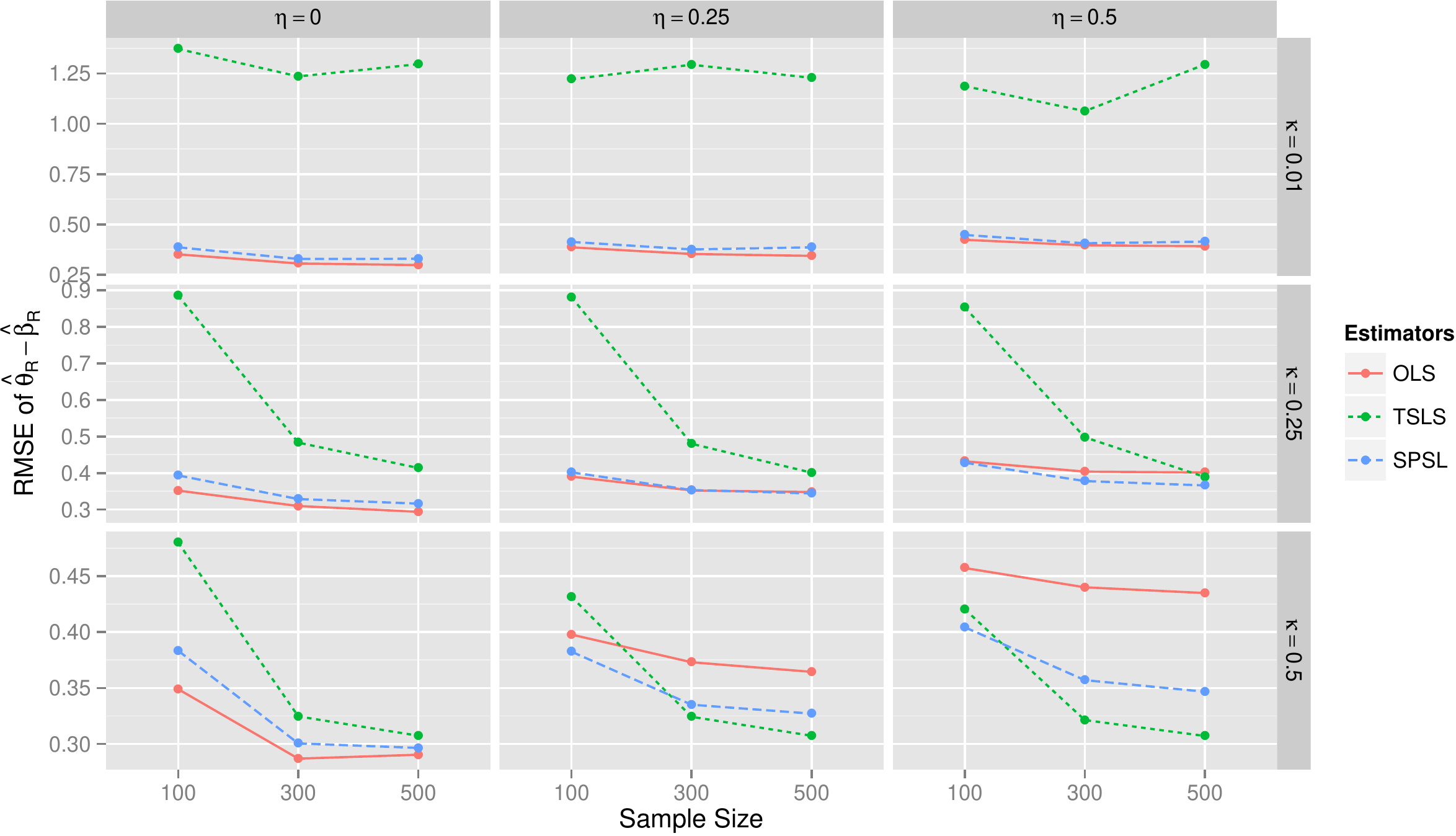}
   \caption{Monte Carlo estimates of the root mean squared errors (RMSEs)
     of the three estimators of interest under the simulation scenarios described
     in Figure \ref{fig:instrumented model}, for the NDE, $\be_{R}$,
     and NIE, $\theta_{R}-\be_{R}$, in
     panels (A) and (B), respectively. The simulations are reported
     for different degrees of confounding, and varying levels of
     instrument's strength, measured by $\eta$ and $\kappa$,
     respectively. These results are based on $10^5$ iterations in
     each condition. 
     \label{fig:mse}}
\end{figure}

\subsection{Evaluation of the Estimators}\label{sec:sim evaluation}
We generated $10^5$ Monte Carlo samples from the aforementioned model,
under combinations of the three values taken by $\eta$, the three values taken by
$\kap$; and the three different sample sizes typical of mental health trials,
$n\in\{100,300,500\}$. Altogether, this produced a total of $270,000$ distinct
synthetic data sets.

The OLS, TSLS and SPSL estimators of the NIE and NDE were computed as follows.
Firstly, for each data set, we computed the OLS estimator, $\wti\theta_{R}$ of
the total effect of $R$ on the outcome $Y$. This corresponds to estimating
the non-mediated model presented in Equation \eqref{eq:total},
and illustrated in Figure \ref{fig:total}. Observe
that the OLS estimator, $\wti\theta_{R}$, is identical for all methods
of estimation. Indeed, the estimation of the total effect in this
model is assumed to be unbiased, since
subjects have been randomly allocated to the levels of the
experimental factor, $R$.

Secondly, we fitted the instrumented mediation model, corresponding to the
diagram in Figure \ref{fig:instrumented model}, for the three different
estimation procedures. This produced the OLS, TSLS and SPSL estimators
for $\be_{R}$, which corresponds to the estimator of the NDE.
The NIE estimator could then be obtained by subtracting that estimate from $\wti\theta_{R}$. 
The Monte Carlo distributions of these quantities for the three estimators
under the scenarios considered are plotted in Figure \ref{fig:beta}.
The performances of these estimators were also compared by computing
the empirical root MSE over the $10^{5}$ Monte Carlo samples generated
in each scenario. These RMSEs are reported in Figure \ref{fig:mse}.

\subsection{Simulation Results}\label{sec:sim result}
Consider the distribution of the values taken by the three
estimators of interest in Figure \ref{fig:beta}. These are reported
for the two causal estimands under scrutiny: NIE,
$\wti\theta_{R}-\be_{R}^{\dag}$, and NDE, $\be_{R}^{\dag}$, in which $\be_{R}^{\dag}$
may represent either the OLS, TSLS or SPSL estimators. 
As expected, for both the NIE and NDE, the OLS was more likely to be biased when
$\eta$ was large, and the TSLS was more likely to exhibit a high
variance when $\kap$ was small. Increases in sample size tended to
result in better precision for all estimators. This trend was
particularly noticeable for the TSLS estimator. 

The overall performances of these estimators were also compared using their respective
RMSEs. These have been reported in Figure \ref{fig:mse}. The patterns
exhibited by the NDE and NIE were very similar. For weak instrumental
variables, i.e.~$\kap=0.01$, the RMSE of the TSLS estimator was high in
comparison to the ones of the OLS and SPSL estimators, due to the
large variance of the TSLS. In these scenarios, the Stein-like estimator's
RMSE was almost identical to the one of the OLS. By contrast, when the
instrumental variables were strongly predictive of the endogenous
variable, i.e.~$\kap=0.5$, the RMSEs of the different estimators
varied with the amount of bias. This trend is particularly
noticeable in the last row of Figure \ref{fig:mse}(A). Fixing the
correlation between the mediator, $M$, and the confounder, $U$, to be $\eta=0.5$;
one can observe that for small values of $\kappa$, the RMSE of the OLS is
optimal, whereas for large values of $\kappa$, the RMSE of the TSLS is
optimal; while the SPSL strikes a trade-off between these two
counterparts irrespective of the values taken by $\kappa$. 

In practice, we can usually evaluate the strength of a set
of instruments, by computing the $F$-test of the equation for the
$M_{i}$'s. In our simulation, this corresponds to having some
knowledge of $\kappa$. However, it is generally not possible to obtain
any information about the degree of confounding, $\eta$. These
simulations have therefore demonstrated that the SPSL outperforms its
counterparts in a \emph{global} sense --that is, when we `average' the
performances of these estimators over different values of
$\eta$. Intuitively, this approach bears some similarities with the
Bayesian framework for model averaging, in which the degree of
unmeasured confounding, $\eta$, is treated as a source of
uncertainty. 

\section{PROSPECT Study}\label{sec:prospect}
We here re-analyze a randomized controlled trial known as PROSPECT
\citep{Bruce2004}. This study tested the impact of a primary
care intervention on reducing major risk factors for suicide in
late life. Patients were recruited from 20 different primary
care practices on the East coast of the United-States, over a 16-month
period. The intervention consisted in two major components\citep{Bruce2004}. Firstly,
the physicians followed a clinical algorithm specifically designed for
treating geriatric depression. Secondly, the treatment was managed and
adjusted by depression care managers. This primary care intervention
was compared to a treatment as usual (TAU) condition. 
\begin{figure}[t]
  \centering
  \includegraphics[width=7cm]{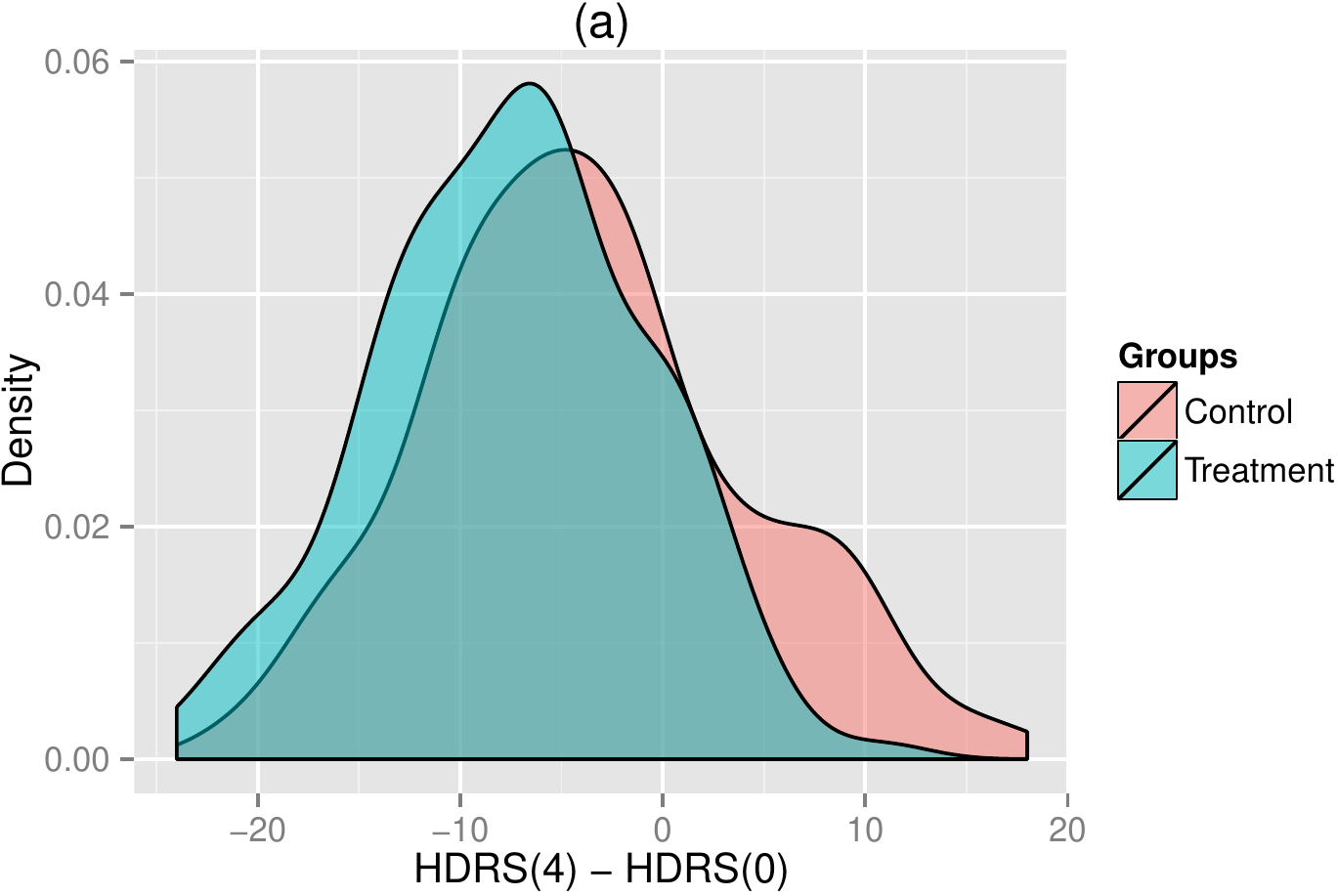}
  \includegraphics[width=7cm]{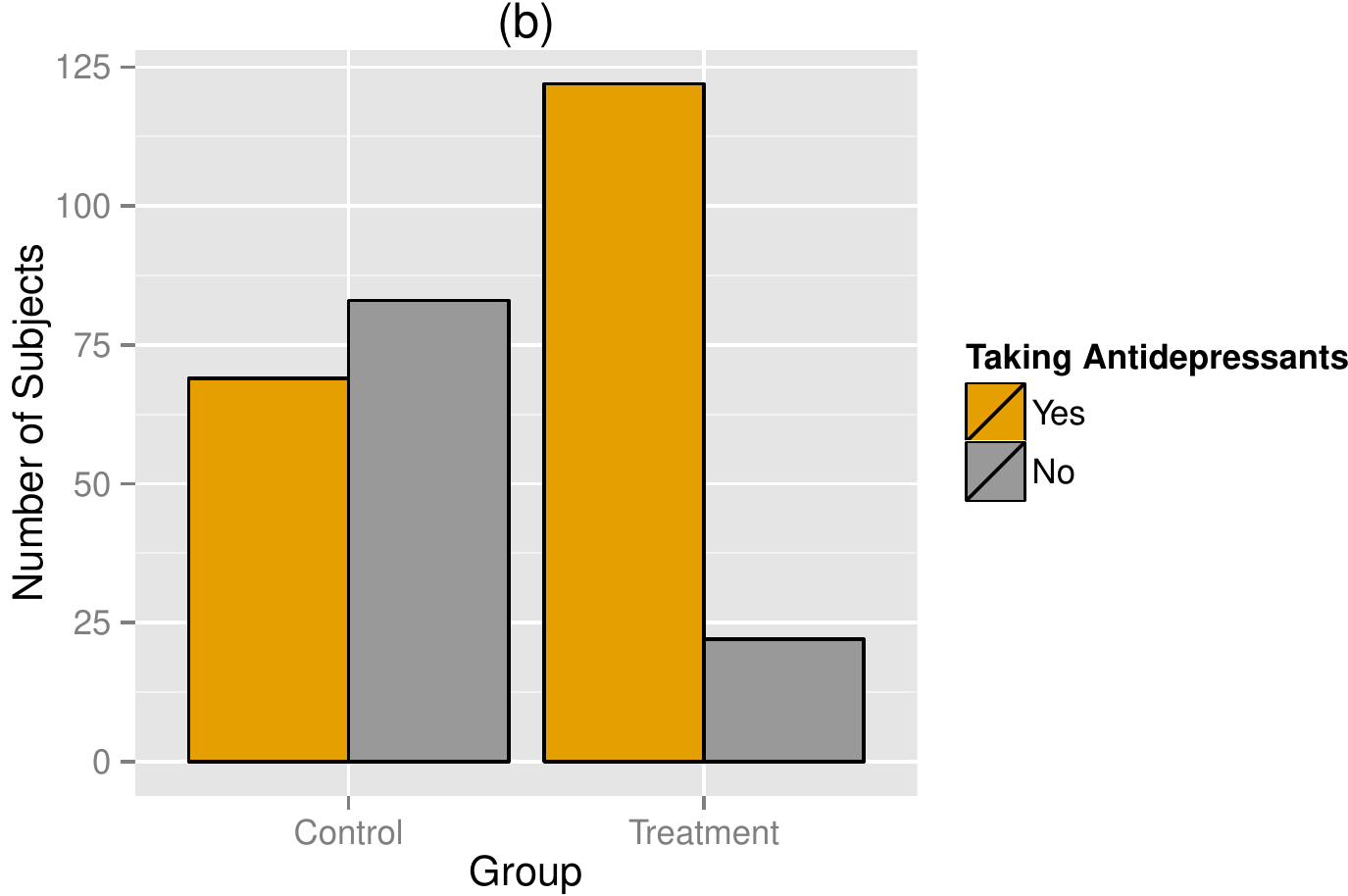}
  \caption{Descriptive statistics for the PROSPECT data set. In panel
    (a), we have provided histograms of the difference,
    $\te{HDRS}(4)-\te{HDRS}(0)$, for both
    the control and treatment groups, where HDRS(0) and HDRS(4) denote
    the Hamilton Depression
    Rating Scale (HDRS) at baseline and after a four-month
    follow-up. In panel (b), the barplots represent the distribution of
    patients according to whether or not they have been taking
    antidepressants, which here corresponds to the intermediate
    variable, $M$, reported by treatment groups, $R$.}
    \label{fig:prospect}
\end{figure}

\subsection{Mediation Model}\label{sec:prospect mediation}
The main question of interest here is to investigate whether the
intent-to-treat effect of the intervention on the 4-month Hamilton
Depression Rating Scale (HDRS) was due to a direct effect of
treatment allocation, after excluding the indirect effect
mediated through taking antidepressant medication. Thus, the intermediate variable in this
study should be regarded as a nuisance mediator, since we are
primarily interested in the direct effect. The PROSPECT mediation study is
therefore unusual, in the sense that the main effect of interest
in the present analysis, is not the indirect effect or NIE as in most
mediation studies. 

The subjects' scores on the HDRS
after a four-month follow-up is the main outcome under scrutiny.
The instrumental variables for the mediator were defined as the set
of interaction terms between the randomized intervention and the baseline
covariates. This particular choice of instruments has been proposed
previously\citep{TenHave2007}, and we are here following this choice for
comparability; see also Small \citep{Small2012}
for a discussion of the use of interaction terms as instruments in the
context of causal mediation. The
instruments were found to be good predictors of the endogenous
mediator; and explained about 50\% of the variance in that
variable. Fitting a linear regression with taking prescribed
antidepressant medication as the 
response, and the instruments as predictors, resulted in a highly
significant $F$-statistic $(F=9.10,\df_{1}=6,\df_{2}=282,p<0.001)$,
thereby justifying our choice of instruments for this study. See also
a similar analysis of the strength of the instruments in this study in
Emsley et al.~(2010)\citep{Emsley2010}.

The baseline variables included HDRS scores at baseline, denoted
HDRS(0), a binary variable denoting suicide ideation, 
past medication use (i.e.~whether or not patients had been using past
medication for dementia and other conditions but excluding
psychotropic treatment for depression), and antidepressant use
(i.e.~specifically whether or not patients had been using antidepressant
medication in the past). Moreover, the model also included two dummy variables,
which controlled for the three different collection sites that were
used in the study. Descriptive statistics for the main variables
of interest in this study, have been reported in Figure \ref{fig:prospect}.
\begin{table*}[t]
\centering
\caption{Re-analysis of the PROSPECT data set$^{a}$\tnote{a}, in which the
  outcome variable is the Hamilton Depression Rating Scale at
  four-month, HDRS(4), the main intervention is the primary care
  intervention of interest, whereas the mediator is taking
  antidepressant medication. Three estimators of interest are here compared. These include the
  Ordinary Least Squares (OLS), Two-Stage Least Squares (TSLS), and
  the Semi-Parametric Stein-Like (SPSL) estimators. 
  Bootstrapped standard errors$^{b}$\tnote{b} for all estimators are denoted in
  parentheses.\label{tab:prospect}}
  \setlength{\tabcolsep}{0.1cm}
  \renewcommand{\arraystretch}{1.15}
\begin{3table}
\begin{tabular}{@{}>{\centering}m{10pt}>{\raggedright}m{150pt}>{\centering}m{1pt}|
                >{\raggedleft}m{60pt}>{\raggedleft}m{60pt}>{\raggedleft}m{60pt}m{1pt}@{}}
   \toprule
   \mcol{2}{c}{\textit{Variables in Model for $Y$}} &&
   \mcol{1}{c}{OLS} & \mcol{1}{c}{TSLS} & \mcol{1}{c}{SPSL} & \\
   \midrule
\mcol{2}{l}{\textit{Randomization \& Mediator:}}&& & & & \tabularnewline
&$R:$ Primary Care Intervention&&$-$2.66 (0.96) & $-$2.38
(1.37)& $-$2.40 (1.07)\tabularnewline
&$M:$ Antidepressant Medication&&$-$1.24 (1.09) &$-$1.95 (2.56)& $-$1.90 (1.48)\tabularnewline
\mcol{2}{l}{\textit{Baseline Covariates:}}&& & & & \tabularnewline
&$X_{1}:$ HDRS(0)&&0.62 (0.07) & 0.62 (0.07) & 0.62 (0.07)\tabularnewline
&$X_{2}:$ Suicide Ideation &&1.25 (0.96)&1.25 (0.96)& 1.25 (0.94)\tabularnewline
&$X_{3}:$ Past Medication Use&&1.48 (1.07)&1.59 (1.10)&1.58 (1.05)\tabularnewline
&$X_{4}:$ Antidepressant Use&&$-$0.14 (0.40)& $-$0.07 (0.44)& $-$0.07 (0.40)\tabularnewline
&$X_{5}:$ Second Collection Site&&$-$0.46 (0.99) &$-$0.50 (0.97)& $-$0.49 (1.01)\tabularnewline
&$X_{6}:$ Third Collection Site&&$-$2.13 (1.05)& $-$2.05 (1.16)&$-$2.05 (1.04)\tabularnewline
\midrule
\mcol{2}{l}{\textit{Causal Effects:}}&& & & & \tabularnewline
&NDE: Natural Direct Effect &&$-$2.66 (0.96) & $-$2.38 (1.37)& $-$2.40 (1.07)\tabularnewline
&NIE: Natural Indirect Effect && $-$0.48 (1.26) & $-$0.76 (1.60) & $-$0.74 (1.35)\tabularnewline
\mcol{2}{l}{\textit{Shrinkage Parameter:}}&& & & & \tabularnewline
&SPSL's $\wh{\al}$\tnote{c}&&$-$$-$&$-$$-$&0.924&\tabularnewline
   \bottomrule
\end{tabular}
\begin{tablenotes}
    \item[a] Complete cases, for whom all measures were available,
      $n=296$.
    \item[b] The SEs for all estimators are based on $1,\!000$
      bootstrap iterations. 
    \item[c] The estimated shrinkage used in the computation of
      the SPSL estimator, where the optimal shrinkage is estimated
      using a projection matrix, $\bP$, which is specified to be a
      unit matrix with a single one in its diagonal, corresponding to
      the offer of treatment variable, $\be_{R}$.
\end{tablenotes}
\end{3table}
\end{table*}
\subsection{Results of Re-analysis}\label{sec:prospect results}
The results of this re-analysis are reported in Table
\ref{tab:prospect}. The natural direct and natural indirect effects have been
computed for the three estimators of interest. The OLS estimates and
their standard errors were found to be approximately identical to the
ones reported by Ten Have and colleagues in a previous analysis of the
same data set \citep{TenHave2007}.

In this paper, we have introduced a generalization of the
SPSL estimator, which includes the use of a projection matrix, $\bP$.
Such a matrix permits to restrict the computation of the shrinkage
parameter, $\al$, to a specific subset of variables. In the case of
PROSPECT, the intermediate variable in this study is treated as a
nuisance mediator, in the sense that the main focus of the analysis lies
in estimating the NDE. Therefore, one can
select a projection matrix, $\bP$, that emphasizes the
estimation of the NDE. This can be done by
specifying $\bP$ to be a unit matrix with a single one in its
diagonal, corresponding to the offer of treatment,
$\be_{R}$. The results for the projected SPSL
estimator are reported in the third column of Table
\ref{tab:prospect}. The values of the SPSL and associated
shrinkage estimator did not markedly differ, when using an identity
matrix (results not shown); thereby indicating that the amount of shrinkage exerted by the
SPSL estimator was mostly determined by the OLS and TSLS estimates of
$\be_{R}$, the main estimand of interest. 

As expected, the values
taken by the NDE and NIE under the SPSL framework were located between
the ones of the OLS and TSLS estimators. Similarly, the standard errors
(SEs) of the SPSL estimator was also found to strike a trade-off between the SE
of the OLS estimator and the SE of the TSLS estimator, for both the NDE and NIE. The shrinkage
parameter of the SPSL estimator was found to be close to unity,
$\wh{\al}=0.924$. Thus, the bias and variance of the TSLS estimator was
favored over the corresponding properties of the OLS estimator. This
suggested that, thanks to the strength of the instruments used in this
study, we obtained a decrease in bias, which solely coincided with a
moderate increase in the variance of the TSLS.

From a practical perspective, our re-analysis indicates that a naive
use of the OLS tends to lead to an over-estimation of the direct effect of
the intervention; which therefore downplays the importance of the use of
antidepressant medication. 

\section{Conclusions}\label{sec:conclusion}
We have here demonstrated the usefulness of the SPSL estimator in the
context of causal mediation analysis. This implementation has also generalized some of
the previous uses of this family of estimators, by restricting the
estimation of the shrinkage parameter to a subset of the parameters
of interest. Although the strength of a set of instruments can usually
be estimated from
the data; the degree of unmeasured confounding is, by definition,
unknowable. In such a context, the SPSL framework produces an
estimation, which on average, will outperform both the OLS and the
TSLS, in terms of MSE. 

Furthermore, the SPSL estimator possesses desirable asymptotic
properties. Under standard assumptions on the properties of the
instruments, the SPSL estimator is indeed asymptotically unbiased.
It also has the advantage of being directly estimable from
the data. Moreover, the shrinkage parameter used in combining the
OLS and TSLS estimators may be of special interest. This parameter
can be interpreted as a gauge that measures the usefulness of the
instruments, in terms of gains in MSE. That is, a very low value for
$\wh\al$ indicates that the OLS is preferable over the TSLS, and
therefore that the corresponding instruments mostly contribute to
increasing the variance of the combined estimator, without substantial
gains in terms of unbiasedness.

Throughout this paper, we have assumed that the IVs of interest were
valid instruments. In particular, we have required that each IV only
affects the clinical outcome through the intermediate variable.
Moreover, these
assumptions have been tailored to the case in which the instruments are
constructed by interacting some of the baseline variables with
treatment offer, following the work of Small\cite{Small2012}.
Such assumptions are particularly important in our context, since the
asymptotic unbiasedness of the SPSL estimator solely holds, when the
TSLS estimator is also guaranteed to be asymptotically unbiased. Note,
however, that the SPSL framework is more widely applicable, and could
be used with instruments that are not necessarily
composed of interaction terms. 

It is of special interest to consider the behavior of the SPSL
estimator, when some of our assumptions fail to be satisfied. Let us
first focus on some of the aforementioned OLS and TSLS assumptions.
We can evaluate how the violation of these assumptions would impact on
the behavior of the SPSL estimator. In the first instance, consider
condition (OLS--4), which requires that $\E[X_{i}\pri X_{i}]$ should
be full-rank; or equivalently that the OLS estimator is identified.
If such an assumption were to fail, then the condition number of the matrix,
$\E[ X_{i}\pri X_{i}]$, would be very high, and consequently the
determinant of its inverse would be very large. As a result, this
would produce a large OLS variance, possibly tending towards
infinity. Therefore, everything else being equal, the failure of
(OLS--4) would be likely to put the OLS at a disadvantage, in
comparison to the TSLS, in the construction of SPSL.

Similarly, when considering the TSLS estimator, one can also predict
the consequences of the violation of certain assumptions.
Consider (TSLS--3), for instance. This assumption requires that the
matrices, $\E[Z_{i}\pri Z_{i}]$ and $\E[Z_{i}\pri X_{i}]$, are both
full-rank; which guarantees that the TSLS estimator is identified. If
this condition were to fail, we would obtain a very large variance for
the TSLS estimator. As a result, a failure of assumption (TSLS--3)
would then lead to the SPSL estimator being shrank toward the value of
the OLS estimator. Moreover, the TSLS would also suffer if
condition (TSLS--4) were to fail. This assumption requires that the
instruments, $Z_{i}$'s, are relevant, in the sense that they should be
correlated with the intermediate variables, $M_{i}$'s. If this
assumption were to be violated, the instruments would solely
contribute to TSLS by increasing its variance; thereby making it more
likely for the combined estimator to favor the OLS estimator over its
TSLS counterpart. 

Observe that all of the assumptions that we have made in this paper
have also been posited by Small (2012)\cite{Small2012}, in his
investigations of the properties of IVs which are defined as
interaction terms between baseline variables and the experimental variable. This choice of IVs
corresponds to the instruments that we have used in the PROSPECT data
set. In this setting, the IV assumptions could be subjected to
a sensitivity analysis, as demonstrated by Small \cite{Small2012}; and
we refer the reader to this paper for further details on the type of
sensitivity analysis that can be conducted, when using interaction
terms as instruments. However, further research will be needed to 
generalize these sensitivity analyses to the case of the SPSL
estimator. 

The SPSL could straightforwardly be applied to other causal estimands. 
It has been used to optimize the estimation of Local Average Treatment
Effects (LATEs) \cite{Ginestet2017a}, and could be implemented in
other settings. Moreover, such methods could also accommodate other
families of estimators, such as the jackknife IV estimator
(JIVE)\cite{Angrist1995}, for instance.
Further research may also concentrate on extending the applicability of the
present methods to data sets with binary outcomes. Such
extensions will need to rely on the use of IVs in generalized linear
models. Several authors have proposed methodological
frameworks for allowing the use of instruments in this
context\cite{Clarke2010,Clarke2012}; and the Stein-like estimators
could be adapted to generalized linear models using the approaches
advocated by these authors.

\section{Appendix A: Proofs of Propositions}
\begin{proof}[Proof of Proposition \ref{pro:uncorrelated}]
    The error term, $\ep_{i}$, has been defined in equation
    (\ref{eq:ep}), and can be seen to be the sum of three distinct
    random variables centered at zero. Thus, we solely need to consider
    whether or not the interaction terms, $R_{i}X_{i}$'s, are
    uncorrelated with each of the summands composing the $\ep_{i}$'s
    in equation \eqref{eq:ep}.
    Indeed, whenever a random variable is pairwise uncorrelated with a set of
    random variables, it is also uncorrelated with the sum of these
    variables. We will therefore consider the three summands of $\ep_{i}$ in
    turn. These are $A_{i1}:=\big(Y_{i}(0,0) - \E[Y_{i}(0,0)|X_{i}]\big)$,
    $A_{i2}:=\big(\be_{M,i}-\be_{M}\big)M_{i}$ and 
    $A_{i3}:=\big(\be_{R,i}-\be_{R}\big)R_{i}$.

    Furthermore, observe that the covariance of $\ep_{i}$ with the
    interaction term is a vector of order $(k\times1)$. Hence, the
    proposition is proved, if we are able to show that for each of the
    $j\tth$ component of $R_{i}X_{i}$, we have
    \begin{equation}\notag
        \cov(R_{i}X_{ij},\ep_{i}) 
        = \cov\!\bigg(R_{i}X_{ij},\sum_{l=1}^{3}A_{il}\bigg) = 0,
    \end{equation}
    where $j=1,\ldots,k$. Thus, we fix an arbitrary component, say
    $R_{i}X_{ij}$, in the sequel; and consider its covariance with
    each of the three summands of $\ep_{i}$.  

    Firstly, for $A_{i1}$, observe that the covariance 
    $\cov(R_{i}X_{ij},A_{i1})$ can be expressed as the difference, 
    $\E[R_{i}X_{ij}A_{i1}]-\E[R_{i}X_{ij}]\cdot\E[A_{i1}]$.
    Using the tower rule, we have
    \begin{equation}\notag
       \E[A_{i1}] = \E\big[Y_{i}^{0,0} - \E[Y_{i}^{0,0}|X_{i}]\big]
                 = \E\big[Y_{i}^{0,0}\big] - \E\big[\E[Y_{i}^{0,0}|X_{i}]\big]=0,
    \end{equation}
    where for convenience, we have defined the shorthand, $Y_{i}^{0,0}:=Y_{i}(0,0)$.
    It then suffices to consider the quadratic term,
    $\E[R_{i}X_{ij}A_{i1}]$, which simplifies as follows, 
    \begin{equation}\notag
      \begin{aligned}
        \E\big[R_{i}X_{ij}(Y_{i}^{0,0}-\E[Y_{i}^{0,0}|X_{i}])\big]
        =\E\big[R_{i}\big]\cdot\E\big[X_{ij}(Y_{i}^{0,0}-\E[Y_{i}^{0,0}|X_{i}])\big],
      \end{aligned}   
    \end{equation}
    using the fact that the $R_{i}$'s have been randomized. 
    Through another application of the tower rule, the second term on
    the RHS of the latter equation gives
    \begin{equation}\notag
         \E\big[\E[X_{ij}(Y_{i}^{0,0}-\E[Y_{i}^{0,0}|X_{i}])|X_{i}]\big]
         = \E\big[X_{ij}(\E[Y_{i}^{0,0}|X_{i}]-\E[Y_{i}^{0,0}|X_{i}])\big]
         = 0.
    \end{equation}

    Secondly, considering the covariance of $R_{i}X_{ij}$ with the
    second summand of the error term, $A_{i2}$; we can again apply the tower
    rule in order to obtain 
    \begin{equation}\notag
      \begin{aligned}
        \E\big[(\be_{M,i}-\be_{M})M_{i}\big] 
        &= \E\big[\E[(\be_{M,i}-\be_{M})M_{i}|R_{i},X_{i}]\big] \\
        &= \E\big[\E[(\be_{M,i}-\be_{M})|R_{i},X_{i}]\cdot\E[M_{i}|R_{i},X_{i}]\big], \\
      \end{aligned}
    \end{equation}
    where the second equality is a consequence of
    assumption (A3). Moreover, the first term inside the expectation
    on the RHS of the latter equation becomes, 
    \begin{equation}\notag
       \E[(\be_{M,i}-\be_{M})|R_{i},X_{i}] 
       = \E[(\be_{M,i}-\be_{M})|X_{i}]
       = \big(\E[\be_{M,i}|X_{i}]-\be_{M}\big) = 0,
    \end{equation}
    using in turn, the fact that the $R_{i}$'s are randomized, and
    assumption (A2). Thus, the covariance,
    $\cov(R_{i}X_{ij},A_{i2})$, reduces to the quadratic term 
    $\E[R_{i}X_{ij}A_{i2}]$. However, using the tower rule, this
    quantity can be expressed as 
    \begin{equation}\notag
      \begin{aligned}
        \E[R_{i}X_{ij}(\be_{M,i}-\be_{M})M_{i}]
        &= \E\big[\E[R_{i}X_{ij}(\be_{M,i}-\be_{M})M_{i}|R_{i},X_{i}]\big]\\
        &= \E\big[R_{i}X_{ij}\cdot\E[(\be_{M,i}-\be_{M})M_{i}|R_{i},X_{i}]\big]\\
        &= \E\big[R_{i}X_{ij}\cdot\E[(\be_{M,i}-\be_{M})|R_{i},X_{i}]\cdot\E[M_{i}|R_{i},X_{i}]\big]\\
        &=0,
      \end{aligned}
    \end{equation}
    where the third equality follows from assumption (A3).

    Thirdly, the covariance, $\cov(R_{i}X_{ij},A_{i3})$ with
    $A_{i3}=(\be_{R,i}-\be_{R})R_{i}$, can similarly be simplified
    by applying the fact that the $R_{i}$'s are randomized, such that 
    \begin{equation}\notag
       \E[(\be_{R,i}-\be_{R})R_{i}]  
       = \E[(\be_{R,i}-\be_{R})]\cdot\E[R_{i}] 
       = \big(\E[\be_{R,i}]-\be_{R}]\big)\cdot\E[R_{i}] 
       = 0,
    \end{equation}
    which follows from our definition of $\be_{R}$. Thus, the
    covariance, $\cov(R_{i}X_{ij},A_{i3})$, reduces to the quadratic
    term, $\E[R_{i}X_{ij}A_{i3}]$, which can be expressed as 
    \begin{equation}\notag
      \begin{aligned}
        \E[R_{i}X_{ij}(\be_{R,i}-\be_{R})R_{i}]
        &= \E[R_{i}^{2}]\cdot\E[X_{ij}(\be_{R,i}-\be_{R})]\\
        &= \E[R_{i}^{2}]\cdot\E\big[\E[X_{ij}(\be_{R,i}-\be_{R})|X_{i}]\big]\\
        &= \E[R_{i}^{2}]\cdot\E\big[X_{ij}\E[(\be_{R,i}-\be_{R})|X_{i}]\big]\\
        &=0,
      \end{aligned}
    \end{equation}
    where the last equality follows from the first part of assumption (A2).
\end{proof}

\begin{proof}[Proof of Proposition \ref{pro:tsls consistency}]    
    We can first invoke proposition \ref{pro:uncorrelated}, which
    guarantees that $\cov(R_{i}X_{i},\ep_{i})=\bzero$, for every
    subject. Moreover, since the baseline variables, $X_{i}$'s, are
    assumed to be exogenous, it also follows that 
    $\cov(Z_{i},\ep_{i})=\bzero$, since we have defined 
    the $Z_{i}$'s as $(X_{i}\pri,R_{i},R_{i}X_{i}\pri)\pri$. Then, the proof
    of the consistency of $\atsls\bbe$ proceeds in a standard fashion by
    applying (A1). See chapter 5 of \citep{Wooldridge2002} for details. 
\end{proof}

\begin{proof}[Proof of Proposition \ref{pro:sps}]
   The optimal value of $\al$ is obtained after minimizing
   $f_{\al}:=\mse(\bP\bar\bbe_{\al})$. We will expand the trace of this criterion
   as was done in equation (\ref{eq:mse decomposition}), such that 
   \begin{equation}\notag
       \tr f_{\al} = \tr(\al^{2}M_{1} + 2\al(1-\al)C + (1-\al)^{2}M_{2}),
   \end{equation}
   with $M_{1}:=\mse(\bP\aols\bbe)$, $C:=\cse(\bP\aols\bbe,\bP\atsls\bbe)$, and
   $M_{2}:=\mse(\bP\atsls\bbe)$, respectively. Commuting the derivative
   operator with the trace, we obtain
   \begin{equation}\notag
        \tr(\partial f/\partial\al) = 
        2\al\tr(M_{2} - 2C + M_{1}) - 2\tr(M_{2} - C).
   \end{equation}
   Setting this expression to zero and solving for $\al$, yields $\al :=
   \tr(M_{2}-C)/\tr(M_{2}-2C + M_{1})$, as required. 

   In addition, a second derivative test can be performed in order to
   show that such minimizer is, in fact, a unique global minimizer.
   \begin{equation}\label{eq:second derivative}
        \tr(\partial^{2}f/\partial\al^{2}) = 2\tr(M_{1} - 2C + M_{2}).
   \end{equation}
   By assumption, the random vectors, $\aols\bbe$ and $\atsls\bbe$, are
   elementwise squared-integrable. Thus, the components, $\E[(\aols\be_{j}
   - \be_{j})^{2}]$, of $M_{1}$ are finite. Hence, using the linearity
   of the trace, the MSE of $\bP\aols\bbe$ can be treated as a sum of
   real numbers, thereby yielding the $L^{2}$-norm on $\R^{k+2}$,
   which we may denote by $||\bP(\aols\bbe-\bbe)||$. The latter quantity
   will be referred to as the (trace) RMSE of $\aols\bbe$.
   By the same reasoning, it can be shown that $C$ and
   $M_{2}$ corresponds to the inner product, $\lan\bP(\aols\bbe-\bbe),\bP(\atsls\bbe-\bbe)\ran$, and
   the squared norm, $||\bP(\atsls\bbe-\bbe)||^{2}$ on $\R^{k}$, respectively. Thus,
   equation (\ref{eq:second derivative}) can now be expressed as follows,
   \begin{equation}\notag
      \tr(\partial^{2}f/\partial\al^{2})
        = 2\Big(||\bP(\aols\bbe-\bbe)||^{2}
        - 2\lan\bP(\aols\bbe-\bbe),\bP(\atsls\bbe-\bbe)\ran
        + ||\bP(\atsls\bbe-\bbe)||^{2}\Big).
   \end{equation}
   The Cauchy-Schwarz inequality can then be used to produce an
   upper bound,
   \begin{equation}\notag
       \lan\bP(\aols\bbe-\bbe),\bP(\atsls\bbe-\bbe)\ran
       \leq ||\bP(\aols\bbe-\bbe)||\cdot||\bP(\atsls\bbe-\bbe)||. 
   \end{equation}
   Finally, by completing the square, we obtain the following lower bound,
   \begin{equation}\notag
      \tr(\partial^{2}f/\partial\al^{2})
      \geq 2\Big(||\bP(\aols\bbe-\bbe)|| - ||\bP(\atsls\bbe-\bbe)||\Big)^{2} \geq 0,
   \end{equation}
   for every $\bP$, and where equality solely holds when the RMSEs of
   the two estimators, $\aols\bbe$ and $\atsls\bbe$, are identical.
\end{proof}

\section{Appendix B: Simulation Model}
This appendix demonstrates how the variances of the error terms,
$\ep_{i}$'s and $\de_{i}$'s, denoted by $\sig^{2}_{\ep}$ and
$\sig^{2}_{\de}$ respectively; can be obtained in closed form, under
the constraints imposed upon our simulation model. 

We have here assumed the $X_{i}$'s to be exogenous, such that $X_{i}\perp
R_{i}$. Moreover, the confounders, denoted by $U_{i}$'s, have been assumed to
solely affect the relationship between the outcome and the mediator,
such that we also have $U_{i}\perp X_{i},R_{i},R_{i}X_{i}$.
Consequently, the variance of the $M_{i}$'s can be decomposed as
follows, 
\begin{equation}\label{eq:varM}
  \begin{aligned}
    \var(M_{i})\,=\,\, & \ga_{X}^{2}\var(X_{i}) + \ga_{R}^{2}\var(R_{i}) +
        \ga_{RX}^{2}\var(R_{i}X_{i}) + \ga_{U}^{2}\var(U_{i}) \\
      & + 2\ga_{X}\ga_{RX}\cov(X_{i},R_{i}X_{i}) 
        + 2\ga_{R}\ga_{RX}\cov(R_{i},R_{i}X_{i}) 
        + \sig^{2}_{\de};
  \end{aligned}
\end{equation}
using $X_{i}\perp R_{i}$, and the fact that the $U_{i}$'s are
independent of all the other variables on the RHS of equation
\eqref{eq:varM}. It can also easily be
seen that the mean and variance of the interaction variable
$R_{i}X_{i}$ are respectively given by $\E[R_{i}X_{i}]=\E[R_{i}]\E[X_{i}]=0$ and 
$\var(R_{i}X_{i})=\var(R_{i})\var(X_{i})=1$, by using the exogeneity of
the $X_{i}$'s, and the fact that the $X_{i}$'s are centered at zero. 
By a similar argument, the two covariances in equation \eqref{eq:varM}
can be simplified as follows, 
\begin{equation}\notag
    \cov(X_{i},R_{i}X_{i}) = \E[R_{i}]\E[X_{i}^{2}] = 1,
    \quad\te{and}\quad
    \cov(R_{i},R_{i}X_{i}) = \E[R_{i}^{2}]\E[X_{i}] = 0;
\end{equation}
since $R_{i}\in\{0,1\}$, and therefore $\E[R^{k}_{i}]=\E[R_{i}]$, for
every $k$. Hence, after fixing the variance of the $M_{i}$'s at $1$,
we can express the variance of the $\de_{i}$'s in terms of the
remaining parameters in that structural equation, such that
\begin{equation}\notag
    \sig^{2}_{\de}(\bga) = 
    1 - \big(2\ga_{X}^{2} + \frac{1}{4}\ga_{R}^{2} + \ga_{RX}^{2} + \ga_{U}^{2}
    + 2\ga_{RX}\ga_{X}\big),    
\end{equation}
with $\bga:=(\ga_{X},\ga_{R},\ga_{RX},\ga_{U})\pri$; and 
after using the Bernoulli distribution of the $R_{i}$'s, which gives
$\var(R_{i})=1/4$.
Throughout the simulations, the parameters controlling the effect of
the $X_{i}$'s and $R_{i}$'s have been set to $\ga_{X}:=1/4$, and
$\ga_{R}:=1/\sqrt{2}$, respectively. This choice of parameters has
been selected in order to simplify the expression for
$\sig^{2}_{\de}$, such that we obtain, $\sig^{2}_{\de}=0.75 -
\ga_{RX}^{2} - \frac{1}{2}\ga_{RX} -\ga_{U}$. 

Similarly, we can standardize the variance of the outcome variables,
$Y_{i}$'s. Given that the $X_{i}$'s, $R_{i}$'s and $M_{i}$'s are
cross-correlated, this produces a convoluted formula given by the following,
\begin{equation}\label{eq:varY}
  \begin{aligned}
    \var(Y_{i})\,=\,\, & \be_{X}^{2}\var(X_{i}) + \be_{R}^{2}\var(R_{i}) +
        \be_{M}^{2}\var(M_{i}) + \be_{U}^{2}\var(U_{i}) \\
       & + 2\be_{X}\be_{M}\cov(X_{i},M_{i}) 
        + 2\be_{R}\be_{M}\cov(R_{i},M_{i}) \\
       & + 2\be_{M}\be_{U}\cov(M_{i},U_{i}) + \sig^{2}_{\ep};
  \end{aligned}
\end{equation}
after applying $R_{i}\perp X_{i}$, and using the fact that the $U_{i}$'s are
independent of both the $R_{i}$'s, and the $X_{i}$'s. Equation
\eqref{eq:varY} can be further simplified by using our choice of
parametrization, which gives $\cov(X_{i},M_{i})=2\ga_{X}+\ga_{RX}$, 
$\cov(R_{i},M_{i})=\ga_{R}/4$, and
$\cov(U_{i},M_{i})=\ga_{U}$. Altogether, we therefore obtain a
closed-form formula for the variance of the error terms of the
$Y_{i}$'s, expressed in terms of the model parameters,
$\bbe:=(\be_{X},\be_{R},\be_{M},\be_{U})$ and $\bga$. That is, 
\begin{equation}\notag
  \begin{aligned}
    \sig^{2}_{\ep}(\bbe,\bga) = 1 - 
    \big(2\be_{X}^2 + \frac{1}{4}\be_{R}^2 + \be_{M}^2 + \be_{U}^2 + C\big),
  \end{aligned}
\end{equation}
where $C:=2\be_{X}\be_{M}(2\ga_{X}+\ga_{RX}) + \be_{R}\be_{M}\ga_{R}/2
+ 2\be_{M}\be_{U}\ga_{U}$. Therefore, we have obtained closed form
formulas for both $\sig^{2}_{\de}$ and $\sig^{2}_{\ep}$. These
formulas have then be used to constrain the range of the parameters of
interest, in the different simulation scenarios.

\begin{acks}
This work was supported by an 
MRC project grant MR/K006185/1, Landau et al.~(2013-2016) entitled
``Developing methods for understanding mechanism in complex
interventions.'' This research is partly funded by the National
Institute for Health Research (NIHR) Biomedical Research Centre at
South London and Maudsley NHS Foundation Trust and King's College
London. The views expressed are those of the authors and not
necessarily those of the NHS, the NIHR or the Department of Health.
RE was supported by the MRC North West Hub for Trials Methodology
Research (MR/K025635/1).
\end{acks}

\bibliographystyle{SageV} 
\bibliography{/home/cgineste/ref/bibtex/Statistics,%
             /home/cgineste/ref/bibtex/Neuroscience}
\end{document}